\newcommand\tsup[2][2]{%
 \def\useanchorwidth{T}%
  \ifnum#1>1%
    \stackon[-1.3ex]{\tsup[\numexpr#1-1\relax]{#2}}{\mathchar"307E}%
  \else%
    \stackon[-1ex]{#2}{\mathchar"307E}%
  \fi%
}
\newtheorem{theorem}{Theorem}
\newtheorem{lemma}[theorem]{Lemma}
\newtheorem{definition}[theorem]{Definition}
\newtheorem{obs}[theorem]{Observation}
\newtheorem{claim}[theorem]{Claim}
\newtheorem{example}[theorem]{Example}
\newcommand{\Oh}{O}
\newcommand{\tOh}{\widetilde{O}}
\newcommand{\Os}{O^*}
\DeclareMathOperator{\polylog}{polylog}
\DeclareMathOperator{\poly}{poly}
\DeclareMathOperator{\LL}{Log}
\newcommand{\cS}{{\cal S}}
\newcommand{\tI}{{\widetilde I}}
\newcommand{\tB}{{\widetilde B}}
\newcommand{\tT}{{\widetilde T}}
\newcommand{\tX}{{\widetilde X}}
\newcommand{\tY}{{\widetilde Y}}
\newcommand{\dbtilde}[1]{\widetilde{\raisebox{0pt}[0.85\height]{$\widetilde{#1}$}}}
\newcommand{\ttI}{{\dbtilde I}}
\newcommand{\ttB}{{\dbtilde B}}
\newcommand{\ttT}{{\dbtilde T}}
\newcommand{\ttX}{{\dbtilde X}}
\newcommand{\ttY}{{\dbtilde Y}}
\newcommand{\NN}{\mathbb{N}}
\newcommand{\ZZ}{\mathbb{Z}}
\newcommand{\RR}{\mathbb{R}}
\newcommand{\RRp}{\mathbb{R}_{>0}}
\newcommand{\Sum}{\Sigma}
\newcommand{\eps}{\varepsilon}
\newcommand{\OPT}{\textup{OPT}}
\newcommand{\OPTL}{\textup{OPT}_{\textup{L}}}
\newcommand{\SSR}{$\textup{SSR}$\xspace}
\newcommand{\SSRL}{$\textup{SSR}_{\textup{L}}$\xspace}
\newcommand{\arr}[2]{[#1 \,..\, #2]}
\newcommand{\opU}{{\cal U}}
\newcounter{sideremark}
\title{Approximating Subset Sum Ratio faster than Subset Sum}
\date{}
\author{
  Karl Bringmann\thanks{Saarland University and Max-Planck-Institute for Informatics, Saarland Informatics Campus, Saarbr\"ucken, Germany.
   \texttt{bringmann@cs.uni-saarland.de}. This work is part of the project TIPEA that has received funding from the European Research Council (ERC) under the European Unions Horizon 2020 research and innovation programme (grant agreement No.\ 850979).
  }
}
\begin{document}

\maketitle

\begin{abstract}

Subset Sum Ratio is the following optimization problem: Given a set of $n$ positive numbers~$I$, find disjoint subsets $X,Y \subseteq I$ minimizing the ratio $\max\{\Sum(X)/\Sum(Y),\Sum(Y)/\Sum(X)\}$, where $\Sum(Z)$ denotes the sum of all elements of $Z$. 
Subset Sum Ratio is an optimization variant of the Equal Subset Sum problem. It was introduced by Woeginger and Yu in '92 and is known to admit an FPTAS [Bazgan, Santha, Tuza '98]. The best approximation schemes before this work had running time $O(n^4/\eps)$ [Melissinos, Pagourtzis~'18], $\tOh(n^{2.3}/\eps^{2.6})$ and $\tOh(n^2/\eps^3)$ [Alonistiotis et al.~'22].

In this work, we present an improved approximation scheme for Subset Sum Ratio running in time $O(n / \eps^{0.9386})$. Here we assume that the items are given in sorted order, otherwise we need an additional running time of $O(n \log n)$ for sorting. Our improved running time simultaneously improves the dependence on $n$ to linear and the dependence on $1/\eps$ to sublinear.

For comparison, the related Subset Sum problem admits an approximation scheme running in time $O(n/\eps)$ [Gens, Levner '79]. 
If one would achieve an approximation scheme with running time $\tOh(n / \eps^{0.99})$ for Subset Sum, then one would falsify the Strong Exponential Time Hypothesis [Abboud, Bringmann, Hermelin, Shabtay '19] as well as the Min-Plus-Convolution Hypothesis [Bringmann, Nakos '21]. We thus establish that Subset Sum Ratio admits faster approximation schemes than Subset Sum. This comes as a surprise, since at any point in time before this work the best known approximation scheme for Subset Sum Ratio had a worse running time than the best known approximation scheme for Subset Sum.
%
\end{abstract}

\section{Introduction}


For a set $Z$ of numbers we write $\Sum(Z)$ for the sum of all elements of $Z$. The Equal Subset Sum problem asks for a given set $I$ of $n$ numbers whether any two non-empty disjoint subsets $X,Y \subseteq I$ have equal sum $\Sum(X) = \Sum(Y)$. This is a classic NP-complete problem. A natural optimization variant of Equal Subset Sum is to find two sets with as close sum as possible; this is the Subset Sum Ratio problem. More precisely, in the Subset Sum Ratio problem we are given a set $I$ of $N$ positive numbers and the task is to find disjoint subsets $X,Y \subseteq I$ minimizing the ratio $R(X,Y) := \max\{\Sum(X)/\Sum(Y),\Sum(Y)/\Sum(X)\}$. Here we interpret $z/0 = \infty$ for any $z\ge 0$ to exclude that $X$ or $Y$ are the empty set.
Both Equal Subset Sum and Subset Sum Ratio are well-studied variants of the classic Subset Sum problem, see, e.g.,~\cite{AlonistiotisAMP22,BazganST02,
MelissinosP18,MuchaNPW19,Nanongkai13,WoegingerY92}. 


Subset Sum Ratio was introduced by Woeginger and Yu in '92~\cite{WoegingerY92}, where they proved NP-hardness and designed a 1.324-approximation algorithm.  The first FPTAS for Subset Sum Ratio was presented by Bazgan, Santha, and Tuza~\cite{BazganST02}. This FPTAS was simplified by Nanongkai~\cite{Nanongkai13}. Both papers did not specify their running time, but mentioned that their time complexities were ``quite high ($O(n^5)$ or more)''~\cite{Nanongkai13}. 
Melissinos and Pagourtzis~\cite{MelissinosP18} improved the running time to $O(n^4/\eps)$. Recently, Alonistiotis et al.~\cite{AlonistiotisAMP22} designed approximation schemes running in time $\tOh(n^{2.3}/\eps^{2.6})$ and $\tOh(n^2/\eps^3)$, thus achieving different tradeoffs compared to the algorithm by Melissinos and Pagourtzis.


In this paper, we present an improved FPTAS for Subset Sum Ratio.

\begin{theorem} \label{thm:main}
  Subset Sum Ratio has a $(1+\eps)$-approximation algorithm running in time $O(n/\eps^{0.9386})$. 
\end{theorem}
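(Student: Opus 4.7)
The plan is to reduce Subset Sum Ratio to a small number of approximate Subset Sum instances, while exploiting the extra slack inherent in optimizing a \emph{ratio} rather than matching a fixed target. This slack is what should allow us to bypass the $\Omega(n/\eps^{1-o(1)})$ SETH lower bound that applies to Subset Sum itself, even though Subset Sum is used as a subroutine.

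The first step is to normalize the problem using scale-invariance: multiplying every input by a positive constant preserves the objective, so after guessing the scale $s$ of the smaller sum $\Sum(X^*)$ in an optimal pair $(X^*,Y^*)$ up to a factor of $(1+\eps)$, the task reduces to finding two disjoint subsets of $I$ with sums both lying in $[s,(1+\eps)s]$. The guessing contributes only an $O(\log(n/\eps))$ overhead, absorbed into the claimed bound. Items of value exceeding $s$ are then irrelevant, items of value below $\eps s / n$ contribute negligibly and can be absorbed into an additive slack, and the remaining items lie in a window of ratio $\poly(n/\eps)$.

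The second step is to apply a near-linear approximate Subset Sum subroutine to compute an implicit representation of all achievable subset sums in $[s,(1+\eps)s]$ together with witness subsets, and then search for two such witnesses that are disjoint. This is where the ratio slack pays off: rather than demanding a specific target, one has a whole \emph{range} of acceptable sums, so as long as many sums in the range are realizable, a pigeonhole or random-partition argument forces a disjoint pair essentially for free. Concretely, I would try a randomized color-coding step that independently assigns each item to one of two groups, runs the Subset Sum subroutine separately on the two groups, and looks for a pair of representable sums in the target range that lie within a $(1+\eps)$ multiplicative factor of each other.

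The main obstacle is the trade-off between enforcing disjointness cheaply and maintaining a sublinear $1/\eps$ dependence: naive color coding pays a factor of $2^{-|X^*|-|Y^*|}$ in success probability, which is ruinous when the optimal subsets are large, while an exhaustive search over witnesses reintroduces a linear $1/\eps$ factor. I would circumvent this by splitting the analysis into a sparse case, where $X^*\cup Y^*$ has few elements and can be handled by enumeration or color coding with few trials, and a dense case, where the sumset of the relevant items is rich enough that two disjoint subsets with close sums can be extracted by an additive-combinatorics or density argument rather than by pointwise matching. The exponent $0.9386$ should emerge from optimizing the threshold between these regimes against the $\eps$-dependence of the approximate Subset Sum subroutine; verifying that this optimization indeed yields exactly that exponent, without hidden $\poly(n,1/\eps)$ blowups in the disjointness extraction or the witness-tracking step, is the most delicate part.
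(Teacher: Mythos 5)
Your proposal takes a genuinely different route from the paper, but it has two gaps that are not merely technical bookkeeping --- they are exactly where the paper's novelty lies, and your plan as stated does not survive them.

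\textbf{Gap 1: the dense/pigeonhole case.} You propose that when many sums in the target window are realizable, "a pigeonhole or random-partition argument forces a disjoint pair essentially for free," with a vague appeal to additive combinatorics. The paper devotes its "Failed Approach" discussion (Section~\ref{sec:overview}) precisely to why this does not work: after rounding to multiples of roughly $\eps \cdot \max(I)/n$, a pigeonhole solution with equal rounded sums is guaranteed to exist once there are about $\log(1/\eps)$ items, but that solution need not contain any large item, and if it is made of small items only, the rounding error swamps the true sums and the ratio guarantee evaporates (the paper even gives a concrete counterexample, $I = \{100^0,\dots,100^9,100^{100}\}$). The paper's fix is a \emph{refined} pigeonhole (Lemma~\ref{lem:refinedpigeon}): partition $I$ into top items $T$ and bottom items $B$, and show that if $|\cS(B)|\cdot|\cS(T)|$ exceeds the sum range then there are disjoint equal-sum sets intersecting $T$, so a large item is forced in. Your plan needs this or an equivalent mechanism, and nothing in your sketch supplies it. Your random-partition idea has the same flaw: it finds \emph{some} disjoint close-sum pair, not one anchored at the scale where the rounding is tight.

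\textbf{Gap 2: the linear dependence on $n$.} Your plan normalizes, guesses the scale $s$, prunes to a window, and then runs an approximate Subset Sum subroutine on the surviving items. That still leaves up to $n$ items per scale guess, and guessing the scale up to $(1+\eps)$ already costs $\Theta(\log_{1+\eps} n) = \Theta((\log n)/\eps)$ candidates after fixing the largest element, so the stated $O(\log(n/\eps))$ overhead is off. More fundamentally, you have no analogue of the paper's structural reduction (Lemma~\ref{lem:reduction}), which shows that for each choice of largest element $\ell$ only the top $\polylog(1/\eps)$ items $I\arr{\ell-\polylog 1/\eps}{\ell}$ matter, and this holds up to a $(1+\eps)$ factor in the objective. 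That reduction is what turns the problem into $n$ independent instances of size $\polylog(1/\eps)$ and yields the linear $n$ factor; it is also nontrivial to prove (the paper's Example shows the naive per-$\ell$ claim is false, and the proof has to shift to a different $\ell'$ in the locally dense case). Your window-pruning observation gets you a range of ratio $\poly(n/\eps)$, not a bounded item count, and the two are not interchangeable.

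Finally, you acknowledge that you have not derived the exponent $0.9386$ and that you are unsure whether hidden $\poly(n,1/\eps)$ factors appear. In the paper the exponent comes from balancing two concrete algorithms (a geometric meet-in-the-middle over $T$ and $B$, and the refined-pigeonhole meet-in-the-middle) via the parameter $\psi = \max(T)/\min(T)$ and then optimizing the split size $\tau$, giving $\tfrac{4}{3}\log 3/\log(3\cdot 4^{1/3}) \approx 0.9386$. Without a concrete pair of algorithms and a concrete case split, there is no optimization to carry out, so the exponent in your sketch is aspirational rather than derived.
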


Here we assume that the input items are given in sorted order, otherwise we need an additional running time of $O(n \log n)$ to sort the input. 
Our machine model is either the Real RAM or the Word~RAM with floating point input, as we discuss in detail in Section~\ref{sec:preliminaries}.

Note that we improve all previous running times in terms of both $n$ and~$\eps$. Our algorithm is the first approximation scheme for Subset Sum Ratio with a linear dependence on~$n$, and also the first with a sublinear dependence on $1/\eps$. 


\paragraph{Subset Sum}
To put our result in context, consider the related Subset Sum problem. The decision version of Subset Sum asks for a given set $I$ of $n$ positive integers and a target $t$ whether any subset $X \subseteq I$ sums to $\Sum(X) = t$. The optimization version of Subset Sum instead asks to compute the number $\OPT = \max\{\Sum(X) \mid X \subseteq I, \, \Sum(X) \le t\}$. Naturally, this can be relaxed to a $(1-\eps)$-approximation, which computes a subset $Y \subseteq I$ with $(1-\eps) \OPT \le \Sum(Y) \le \OPT$. 
Subset Sum admits approximation schemes running in time\footnote{By $\tOh$ notation we hide polylogarithmic factors, that is, $\tOh(T) = \bigcup_{c \ge 0} O(T \log^c T)$.} $\tOh(\min\{n/\eps, n + 1/\eps^2\})$~\cite{gens1978approximation,GensL79,KellererMPS03,KellererPS97}; this is known for over 20 years, see also~\cite{BringmannN21} for further lower order improvements. Recent conditonal lower bounds show that we cannot expect much faster algorithms: (1)~An approximation scheme for Subset Sum with running time $\Oh((n+1/\eps)^{1.999})$ would contradict the Min-Plus-Convolution Hypothesis~\cite{BringmannN21}. Moreover, (2) if Subset Sum has an approximation scheme with running time $2^{o(n)} / \eps^{0.999}$ then by setting $\eps < 1/t$ we would obtain an exact algorithm running in pseudopolynomial time $2^{o(n)} t^{0.999}$, which would contradict the Strong Exponential Time Hypothesis~\cite{ABHS19}. Note that the running time $\tOh(n / \eps^{0.9386})$ that we obtain for approximating Subset Sum Ratio is of the form (1) $\Oh((n+1/\eps)^{1.999})$ and of the form (2)~$2^{o(n)} / \eps^{0.999}$. Hence, if we would obtain the same running time for approximating Subset Sum then we would falsify two standard hypotheses from fine-grained complexity theory. This gives strong evidence that \emph{Subset Sum Ratio can be approximated faster than Subset Sum}, which is surprising, since \emph{at any point in time prior to this paper the fastest known algorithms for Subset Sum Ratio were slower than the algorithms for Subset Sum}.

Our result is also surprising on an intuitive level: Since Subset Sum has a trivial search space of size $2^n$ (each item can be in the solution $X$ or not), while Subset Sum Ratio has a trivial search space of size $3^n$ (each item can be in $X$ or in $Y$ or in neither), it seems that Subset Sum is easier than Subset Sum Ratio. This intuition is supported by the state of the art for exact algorithms: Subset Sum is solved by meet-in-the-middle in time $O(2^{n/2}) = O(1.4143^n)$, while Equal Subset Sum (i.e., essentially the exact version of Subset Sum Ratio) is solved by meet-in-the-middle in time $O(3^{n/2})$, which has been recently improved to time $O(1.7088^n)$~\cite{MuchaNPW19} --- still the best known algorithm for Equal Subset Sum is much slower than the best known exact algorithm for Subset Sum, which supports the intuition that Subset Sum is easier than Subset Sum Ratio.
We show that this intuition is wrong for approximation schemes, as Subset Sum Ratio can be approximated faster than Subset Sum (conditional on a fine-grained hypothesis).

\paragraph{Fine-grained Complexity of Optimization Problems}
Our result is part of a recent effort of the fine-grained complexity community to obtain improved, and sometimes even best-possible approximation schemes for classic optimization problems such as Subset Sum, Knapsack, and Partition, see, e.g.~\cite{BringmannC22,BringmannN21,Chan18a,DengJM23,
Jin19,MuchaW019,WuChen22}. 
These three problems can all be solved in near-linear time $\tOh(n)$ when $\eps$ is constant; our result establishes that the same holds for Subset Sum Ratio. The optimal dependence on $\eps$ is essentially settled for Subset Sum, since it can be solved in time $\tOh(\min\{n/\eps,n+1/\eps^2\})$~\cite{gens1978approximation,GensL79,KellererMPS03,KellererPS97} but not in time $O((n+1/\eps)^{1.999})$ assuming the Min-Plus-Convolution Hypothesis~\cite{BringmannN21}. 
Since recently it is also essentially settled for Knapsack, since it can be solved in time $\tOh(n + 1/\eps^2)$~\cite{MaoArxiv23,ChenLMZArxiv23} but not in time $O((n+1/\eps)^{1.999})$ assuming the Min-Plus-Convolution Hypothesis~\cite{KunnemannPS17,CyganMWW19}.
For Partition, there is a gap between an upper bound of $\tOh(n + 1/\eps^{1.25})$~\cite{DengJM23} and a conditional lower bound ruling out $O(n + 1/\eps^{0.999})$~\cite{ABHS19}. 
Our result shows that the time complexity of Subset Sum Ratio is closer to Partition, since in the case $\eps = 1/n$ both Subset Sum Ratio and Partition can be solved in subquadratic time $O(n^{1.9386})$, while for Subset Sum and Knapsack this is ruled out by conditional lower bounds. However, we show that Subset Sum Ratio can be solved with sublinear dependence on $1/\eps$, in contrast to Partition, Subset Sum, and Knapsack which all have a linear lower bound in $1/\eps$. Thus, we establish an interesting new running time behaviour of optimization problems.
Determining the optimal dependence on $\eps$, even conditionally, remains an open problem for Knapsack, Partition, and Subset Sum Ratio.

\paragraph{Techniques} 
We contribute several novel ideas for solving Subset Sum Ratio. First, in Section~\ref{sec:simple} we develop a linear-time reduction that replaces one instance of size $n$ by $n$ instances of size $\polylog(1/\eps)$. Solving these $n$ instances independently naturally leads to a running time that depends linearly on $n$. For example, if we solve each of these $n$ instances of size $\polylog(1/\eps)$ by the known $O(n^4/\eps)$-time algorithm~\cite{MelissinosP18} then we obtain an algorithm running in total time $O(n/\eps \polylog(1/\eps))$, which already significantly improves the state of the art. 
We expect future work on Subset Sum Ratio to make frequent use this reduction. 

After this reduction, in Section~\ref{sec:improved} the remaining goal is to solve Subset Sum Ratio in time $O(\poly(n)/\eps^{0.9386})$. The specific polynomial dependence on $n$ no longer matters at this point. We combine a delicate pigeonhole argument with ideas from rounding, sumsets, and computational geometry in an intricate way to achieve our goal. Along the way, we solve instances with optimal value more than $\sqrt{2}$ to optimality. See Section~\ref{sec:overview} for a more detailed overview of this part.

\paragraph{Open Problems}
Our running time of $O(n/\eps^{0.9386})$ is unlikely to be optimal. We leave the following open problems: (1) Is a running time of the form $\tOh(n + 1/\poly(\eps))$ possible? (2) Can one show a (conditional) lower bound ruling out time $O(n/\eps^c)$ for some constant $c > 0$? (3) Determine the optimal constant $c \ge 0$ such that Subset Sum Ratio is in time $O(n/\eps^c)$.

\section{Preliminaries} \label{sec:preliminaries}


For a set $X$ we write $\Sum(X) := \sum_{x \in X} x$. For two sets $X,Y$ their ratio is defined as 
\[ R(X,Y) := \max\{\Sum(X)/\Sum(Y), \Sum(Y)/\Sum(X)\}.\]
The optimal ratio of a set $I \subset \RRp$ is 
\[ \OPT(I) := \min\{ R(X,Y) \mid \textup{disjoint } X,Y \subseteq I \}. \]
Throughout the paper we use the convention $x/0 = \infty$ for any $x \in \RR$ with $x \ge 0$. This ensures that the optimum ratio is attained by non-empty $X,Y$ (unless $|I| \le 1$).
With this notation, we can define the Subset Sum Ratio problem.

\begin{definition}[Subset Sum Ratio Problem (\SSR)] \label{def:ssr}
  Given a set $I \subset \RRp$ of size $n \ge 2$ and a number $\eps \in (0,1)$, compute disjoint subsets $X,Y \subseteq I$ with $R(X,Y) \le (1+\eps) \OPT(I)$. 
\end{definition}

Let $I$ be an instance of \SSR of size $n$. We write $I[1],\ldots,I[n]$ for its items, and we assume that these items are sorted as $I[1] \le \ldots \le I[n]$. We use the notation $I\arr{i}{j} := \{I[i],I[i+1],\ldots,I[j]\}$ for any integers $1 \le i \le j \le n$. We sometimes also use this notation for $i < 1$ or $j > n$, with the meaning $I\arr{i}{j} := I\arr{\max\{1,i\}}{\min\{n,j\}}$.


For sets $X,Y$ and real $\alpha$ we write $\alpha \cdot X := \{\alpha x \mid x \in X\}$, $X / \alpha := \alpha^{-1} \cdot X$, $X+Y := \{x+y \mid x \in X, y \in Y\}$, and $X-Y := X + ((-1) \cdot Y)$.

All logarithms in this paper are base 2, unless stated otherwise.

Since our algorithms use rounding, as intermediate objects we sometimes obtain multisets rather than sets. We therefore extend the above notation in the natural way to multisets. In particular, we say that $X,Y \subseteq I$ are disjoint if $X,Y$ are multisets such that for every $z \in \RRp$ the sum of the multiplicities of $z$ in $X$ and in $Y$ is at most the multiplicity of $z$ in $I$. It follows that whenever $I$ contains any number at least twice, then $\OPT(I) = 1$, by letting $X$ and $Y$ consist of this number. For this reason, multisets are a trivial boundary case (which nevertheless needs to be treated in our algorithms).

We sometimes use $\tOh$ notation to hide polylogarithmic factors, that is, $\tOh(T) = \bigcup_{c \ge 0} O(T \log^c T)$. In Section~\ref{sec:improved} we will use $\Os$ notation to hide $\poly(n)$ factors.

\paragraph{Machine Model} 
We assume a model of computation where standard arithmetic operations on input numbers, including rounding, can be performed in constant time. An example is the Real RAM model with floor function, as is used in many papers. 

Alternatively, our algorithms also work in a machine model with floating point arithmetic. Observe that, since we are only interested in a $(1+\eps)$-approximation for Subset Sum Ratio, we can safely change every input number by a factor $1+\eps$, in particular we can round every input number to a number of the form $(1 + \eps i) 2^j$, for integers $0 \le i \le 1/\eps$ and $j \in \ZZ$. In other words, without loss of generality we can assume that our input numbers come in floating point format with a $\log(1/\eps)$-bit mantissa. We assume that the exponent also is an $O(\log(n/\eps))$-bit integer, and we assume that the usual arithmetic operations on floating point numbers with $O(\log(n/\eps))$-bit mantissa and exponent take constant time. This is a realistic model of computation. Our algorithm can be analyzed in this model and has the same running time guarantee as on the Real RAM. We omit the error analysis that would be necessary to show that floating point approximation is sufficient inside our algorithm; instead we formally analyze our algorithm in the Real RAM model.

For comparison, the standard model of fine-grained complexity is the Word RAM, where each cell stores a $w$-bit integer for $w = O(\log n)$, and the usual arithmetic and logical operations on two cells take constant time. However, Subset Sum Ratio makes not much sense in this model, as it can be solved exactly in polynomial time. Indeed, the pigeonhole principle (see Observation~\ref{obs:pigeon} below) implies that any instance on more than $w$ many $w$-bit integers has optimal ratio $1$. Thus, we can ignore all but $w$ of the given $n$ input numbers, and find a ratio-1 solution among these $w$ numbers by brute force in time $O(3^w) = \poly(n)$. This issue is avoided by the floating point model since $O(\log(n/\eps))$-bit exponent and mantissa allow to store integers in the range $[1,2^{\poly(n/\eps)}]$, approximated up to a factor $1+\poly(\eps/n)$. The issue is avoided by the Real RAM model, since cells can store arbitrary integers. As discussed above, for simplicity we work in the Real RAM model in this paper.

\section{A Reduction to Few Items} \label{sec:simple}

In this section we present a reduction that replaces an \SSR instance of size $n$ by $n$ instances of size $\polylog(1/\eps)$. Combining this with known approximation schemes immediately yields a simple $\tOh(n/\eps)$-time algorithm for \SSR.

\subsection{Setting up the Relaxed Subproblem \SSRL}

Consider an \SSR instance $I$ of size $n$.
Suppose we guess the largest element appearing in an optimal solution $(X,Y)$. That is, we iterate over all choices of $\ell \in \{1,\ldots,n\}$ and for each choice we want to determine disjoint subsets $X,Y \subseteq I\arr{1}{\ell}$ such that $X$ or $Y$ contains $I[\ell]$ and $R(X,Y)$ is minimized. We introduce the following notation for this purpose:
\[ \OPTL(I) := \min\{ R(X,Y) \mid \textup{disjoint } X,Y \subseteq I,\, \max(I) \in X \cup Y \}. \]
By guessing the largest item $I[\ell]$ in an optimal solution and then restricting to sets containing $I[\ell]$, we observe that
\begin{align} \label{eq:optl} 
  \OPT(I) = \min_{1 \le \ell \le |I|} \OPTL(I\arr{1}{\ell}).
\end{align}
This naturally leads to the following subproblem.

\begin{definition}[Subset Sum Ratio containing Largest element (\SSRL)] \label{def:ssrl}
  Given a set $I \subset \RRp$ of size $n \ge 2$ and a number $\eps \in (0,1)$, compute disjoint subsets $X,Y \subseteq I$ with $R(X,Y) \le (1+\eps) \OPTL(I)$. 
\end{definition}

Comparing with the definition of \SSR (Definition~\ref{def:ssr}), the only difference is that we compare to $\OPTL(I)$ instead of $\OPT(I)$. A subtle and important detail of \SSRL is that we compare with the optimal solution containing the largest item (namely $\OPTL(I)$), but we do not enforce the computed solution $(X,Y)$ to contain the largest item. We will make use of this detail in Section~\ref{sec:improved}, where we design several algorithms for \SSRL, only some of which ensure picking the largest item.

\subsection{Reduction to \SSRL}

By equation (\ref{eq:optl}), in order to solve \SSR on instance $I$ it suffices to solve \SSRL on the $n$ instances $I\arr{1}{\ell}$ for all $1 \le \ell \le n$. This property was used in previous algorithms, see e.g.~\cite{Nanongkai13,MelissinosP18}. Note that this property alone is insufficient for our results, since already writing down all instances $I\arr{1}{\ell}$ would take total time $\Omega(n^2)$, while we will achieve a running time that has linear dependence on $n$. 

Our main new structural insight is that it suffices to consider the largest $\polylog 1/\eps$ items in $I\arr{1}{\ell}$. That is, instead of solving the \SSRL instances $I\arr{1}{\ell}$ we solve the \SSRL instances $I\arr{\ell-\polylog 1/\eps}{\ell}$ for each $1 \le \ell \le n$. More precisely, we prove the following analogue of equation (\ref{eq:optl}). 

\begin{lemma} \label{lem:reduction}
  For any $I \subset \RRp$ and any $\eps \in (0,1)$ we have
  \begin{align} \label{eq:reduction} 
    \OPT(I) \le \min_{1 \le \ell \le n} \OPTL(I\arr{\ell-\polylog 1/\eps}{\ell}) \le (1+\eps) \OPT(I).
  \end{align}
\end{lemma}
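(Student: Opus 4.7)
The plan is to prove the two inequalities separately. The left bound $\OPT(I)\le\min_\ell\OPTL(I\arr{\ell-k}{\ell})$, writing $k=\polylog(1/\eps)$, is immediate: each window $I\arr{\ell-k}{\ell}$ is a sub-multiset of $I$, so any disjoint pair inside a window is also a disjoint pair inside $I$; therefore $\OPT(I)\le \OPT(I\arr{\ell-k}{\ell})\le \OPTL(I\arr{\ell-k}{\ell})$, and the minimum over $\ell$ preserves the inequality.

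For the right bound, the plan is to fix any optimal pair $(X^*,Y^*)$ for $I$, let $\ell^*:=\max\{i:I[i]\in X^*\cup Y^*\}$, and exhibit a single $\ell$ witnessing $\OPTL(I\arr{\ell-k}{\ell})\le (1+\eps)\OPT(I)$. The natural first candidate is $\ell=\ell^*$: when $X^*\cup Y^*\subseteq J:=I\arr{\ell^*-k}{\ell^*}$, the pair $(X^*,Y^*)$ is itself a feasible \SSRL-witness on $J$, because $I[\ell^*]=\max J$ lies in the pair, giving $\OPTL(J)\le R(X^*,Y^*)=\OPT(I)$ directly.

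The substantive case is when some item of $X^*\cup Y^*$ has index below $\ell^*-k$. To handle it, I would partition the items of $I$ into dyadic value-buckets $B_j=(I[\ell^*]/2^{j+1},I[\ell^*]/2^j]$ and split into two sub-cases. In the \emph{dense} sub-case some $B_j$ contains $\Omega(\log(1/\eps))$ items; choosing $\ell$ to be the largest index inside such a $B_j$ puts $\Omega(\log(1/\eps))$ items (including $I[\ell]$ itself) inside the window $I\arr{\ell-k}{\ell}$ in a factor-two value range, and invoking the pigeonhole of Observation~\ref{obs:pigeon} on this rich set produces disjoint $X,Y$ containing $I[\ell]$ with $R(X,Y)\le 1+O(\eps)$, which is at most $(1+\eps)\OPT(I)$ since $\OPT(I)\ge 1$. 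In the complementary \emph{sparse} sub-case every bucket contains $O(\log(1/\eps))$ items, so item values decay geometrically; for a sufficiently large $k=\polylog(1/\eps)$ the tail $\Sum(I\arr{1}{\ell^*-k-1})$ is then at most $\eps\cdot\min(\Sum(X^*),\Sum(Y^*))$, and the projection $(X^*\cap J,Y^*\cap J)$ remains a feasible \SSRL-witness on $J$ with ratio at most $(1+O(\eps))R(X^*,Y^*)$.

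The main obstacle I expect is the sparse sub-case: keeping the window size polylogarithmic in $1/\eps$ rather than $n/\eps$ calls for a preprocessing step that discards items of value at most $\eps\cdot I[\ell^*]/n$, whose total contribution to any sum is bounded by $\eps I[\ell^*]$ and hence cannot influence a $(1+\eps)$-approximation. After this preprocessing the surviving items span a value range of $\poly(n/\eps)$, and the sparse-bucket hypothesis then forces the number of relevant items to be $\polylog(1/\eps)$, so the whole optimum fits inside a single window and the claim follows. Tuning the three regimes so that the single exponent in $k=\polylog(1/\eps)$ works uniformly is the delicate quantitative step.
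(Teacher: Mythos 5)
Your high-level decomposition matches the paper's: establish the trivial lower bound by monotonicity, then handle the upper bound by a dense/sparse dichotomy, invoking pigeonhole in the dense regime and tail-truncation in the sparse regime. The paper phrases the dichotomy globally (``does any window $[z/2,z]$ contain $\ge\LL(4/\eps)$ items?'') rather than via buckets anchored at $I[\ell^*]$, but these are essentially equivalent, and your dense-case argument is sound (modulo re-indexing to the largest item actually used by the pigeonhole pair, a minor omission).

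The genuine gap is in the sparse sub-case. You assert that the tail $\Sum(I\arr{1}{\ell^*-k-1})$ is at most $\eps\cdot\min(\Sum(X^*),\Sum(Y^*))$, but the geometric decay only bounds the tail by $\poly(\eps)\cdot I[\ell^*-1]$, and you never establish a lower bound on $\min(\Sum(X^*),\Sum(Y^*))$. If $\OPT(I)$ is large, $\min(\Sum(X^*),\Sum(Y^*))$ can be arbitrarily smaller than $I[\ell^*]$, so the comparison is not automatic. The paper's key missing step is to compare $(X^*,Y^*)$ against the feasible two-item pair $(\{I[\ell^*]\},\{I[\ell^*-1]\})$, which gives $\OPT(I)\le I[\ell^*]/I[\ell^*-1]$ and hence $\min(\Sum(X^*),\Sum(Y^*))=\max(\Sum(X^*),\Sum(Y^*))/\OPT(I)\ge I[\ell^*-1]$; combined with a tail bound of $\le\tfrac{\eps}{2}I[\ell^*-1]$ this closes the argument. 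Your proposed workaround --- preprocessing to discard items below $\eps\cdot I[\ell^*]/n$ --- does not repair this: their total contribution $\eps I[\ell^*]$ is still not comparable to $\min(\Sum(X^*),\Sum(Y^*))$ when $\OPT(I)$ is large, it yields a window of size $\polylog(n/\eps)$ rather than $\polylog(1/\eps)$, and more fundamentally it changes the instance, whereas the lemma is a structural statement about the fixed windows $I\arr{\ell-k}{\ell}$ of the original $I$, not about a modified instance.
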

To describe the $\polylog 1/\eps$-term more precisely, we define for any $x \ge 1$ the number $\LL(x)$ to be the smallest integer with $2^{\LL(x)} > \LL(x)\cdot x + 1$. One can show that $\log x \le \LL(x) \le \log x + O(\log \log x)$. With this notation, the $\polylog 1/\eps$-term in the lemma can be set to $\LL^2(4/\eps) := (\LL(4/\eps))^2 = O(\log^2(1/\eps))$.


We will prove Lemma~\ref{lem:reduction} in Section~\ref{sec:proofreduction} below.
Let us first see that equation (\ref{eq:reduction}) yields a reduction from \SSR to \SSRL.

\begin{theorem}[Reduction]
  If \SSRL can be solved in time $T_{\textup{L}}(n,\eps)$, then \SSR can be solved in time 
  \[ O(n \cdot T_{\textup{L}}(\polylog 1/\eps,\, \eps/3)). \]
\end{theorem}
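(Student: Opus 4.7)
The plan is to treat this theorem as essentially a direct corollary of Lemma~\ref{lem:reduction}, with a careful composition of approximation factors. Given an \SSR instance $I$ of size $n$, I will invoke the assumed \SSRL subroutine on each of the $n$ suffix windows of logarithmic length that appear in the lemma, and output the best of the $n$ resulting pairs.

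In more detail, I would apply Lemma~\ref{lem:reduction} with accuracy parameter $\eps/3$, so that the window length is $w := \LL^2(12/\eps) = O(\log^2(1/\eps)) = \polylog(1/\eps)$. For each $\ell \in \{2,\ldots,n\}$, define the subinstance $I_\ell := I\arr{\ell - w}{\ell}$ (using the extended bracket convention, so we never go below index~$1$); note $|I_\ell| \le w+1$. Run the assumed \SSRL algorithm on $I_\ell$ with parameter $\eps/3$ to obtain disjoint $X_\ell,Y_\ell \subseteq I_\ell$ with $R(X_\ell,Y_\ell) \le (1+\eps/3)\OPTL(I_\ell)$, and finally return the pair $(X^*,Y^*)$ that minimizes $R$ among the $n-1$ candidates.

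For correctness, Lemma~\ref{lem:reduction} applied with $\eps/3$ provides two guarantees: first, $\min_\ell \OPTL(I_\ell) \ge \OPT(I)$, so every candidate ratio is at least $\OPT(I)$; and second, there exists some index $\ell^*$ with $\OPTL(I_{\ell^*}) \le (1+\eps/3)\OPT(I)$. Chaining the two approximation factors gives
\[
  R(X^*,Y^*) \;\le\; R(X_{\ell^*},Y_{\ell^*}) \;\le\; (1+\eps/3)\,\OPTL(I_{\ell^*}) \;\le\; (1+\eps/3)^2\,\OPT(I),
\]
and since $(1+\eps/3)^2 = 1 + 2\eps/3 + \eps^2/9 \le 1+\eps$ for $\eps \in (0,1)$, the pair $(X^*,Y^*)$ is a valid $(1+\eps)$-approximation. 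For the running time, since the input is sorted we can read off $I_\ell$ in $O(w) = O(\polylog(1/\eps))$ time per $\ell$, and the \SSRL call on $I_\ell$ costs $T_{\textup{L}}(\polylog(1/\eps),\eps/3)$, which absorbs the extraction cost. Summed over $\ell$ this gives the claimed $O(n \cdot T_{\textup{L}}(\polylog(1/\eps),\,\eps/3))$ bound.

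There is no real obstacle here once Lemma~\ref{lem:reduction} is in hand; the only points that require attention are the arithmetic that confirms $(1+\eps/3)^2 \le 1+\eps$ on $(0,1)$, the boundary handling of the first few windows (which is already baked into the extended \texttt{\textbackslash arr} convention), and checking that $X^*,Y^* \subseteq I_{\ell^*} \subseteq I$ so the output is a legitimate \SSR solution. Everything else is a clean pass-through of the structural reduction proved in the lemma.
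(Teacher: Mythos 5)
Your proposal is correct and follows essentially the same route as the paper: run the \SSRL subroutine on each logarithmic-length suffix window from Lemma~\ref{lem:reduction}, take the best of the $n$ candidates, and chain the two $(1+\eps/3)$ factors. The paper phrases the calculation as $(1+\eps)^2 \le 1+3\eps$ followed by the substitution $\eps \mapsto \eps/3$, while you work with $\eps/3$ directly and verify $(1+\eps/3)^2 \le 1+\eps$; these are the same calculation, and the rest (window extraction cost, boundary handling via the extended bracket convention) matches the paper's argument.
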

\begin{proof}
  Write $I_\ell := I\arr{\ell - \polylog 1/\eps}{\ell}$. By running an \SSRL algorithm on each instance $I_\ell$ we obtain disjoint subsets $X_\ell,Y_\ell \subseteq I_\ell$ with ratios $R(X_\ell,Y_\ell) \le (1+\eps) \cdot \OPTL(I_\ell)$. The best solution $(X,Y)$ among $(X_1,Y_1),\ldots,(X_n,Y_n)$ thus has a ratio of 
  \[ R(X,Y) = \min_{1 \le \ell \le n} R(X_\ell,Y_\ell) \le \min_{1 \le \ell \le n} (1+\eps) \cdot \OPTL(I_\ell) 
   \stackrel{(\ref{eq:reduction})}{\le} (1+\eps)^2 \OPT(I). \]
  Since for $\eps \in (0,1)$ we have $(1+\eps)^2 \le 1+3\eps$, by replacing $\eps$ by $\eps/3$ we obtain a solution to the \SSR instance~$I$. The total running time is $O(n \cdot T_{\textup{L}}(\polylog 1/\eps,\, \eps/3))$.
\end{proof}

Note that the above reduction naturally leads to algorithms for \SSR whose running time depends linearly on $n$, since the factor $T_{\textup{L}}(\polylog 1/\eps,\, \eps/3)$ is independent of $n$. 
Observe that any algorithm for \SSR also solves \SSRL (cf.\ Definitions~\ref{def:ssr} and~\ref{def:ssrl}). Thus, we can e.g.\ plug the known $\tOh(n^4/\eps)$-time algorithm for \SSR~\cite{MelissinosP18} into the above reduction. This yields an algorithm for \SSR with running time $O(\frac n \eps \polylog \frac 1 \eps)$. We remark that this already significantly improves the state of the art running time for \SSR.

In Section~\ref{sec:improved}, we will design an improved \SSRL algorithm with running time $O(\poly(n)/\eps^{0.93856})$. Plugging this into the above reduction yields time $O(n/\eps^{0.9386})$ (where we hide $\polylog(1/\eps)$ factors by rounding the exponent of $1/\eps$). This proves our main result Theorem~\ref{thm:main}. 

For now it remains to prove Lemma~\ref{lem:reduction}.


\subsection{Proof of Lemma~\ref{lem:reduction}} \label{sec:proofreduction}

Let us start by discussing the the main difficulty in proving Lemma~\ref{lem:reduction}:
The step from equation (\ref{eq:optl}) to (\ref{eq:reduction}) is non-trivial, because it can happen that for some values $\ell$ we have
\[ \OPTL(I\arr{\ell - \polylog 1/\eps}{\ell}) \gg \OPTL(I\arr{1}{\ell}). \] 
We illustrate this situation in the following example. 

\begin{example}
{\normalfont
Consider $I = \{1,2,\ldots,n-2,n-1,{n \choose 2}\}$ and $\ell = n$, so that $I\arr{1}{\ell} = I$. Then:
\begin{itemize}
  \item $\OPTL(I\arr{1}{\ell}) = 1$. \\ To see this, set $X = \{1,2,\ldots,n-2,n-1\}$ and $Y = \{{n \choose 2}\}$ and note that $X,Y$ are disjoint subsets of $I\arr{1}{\ell}$ and $Y$ contains $I[\ell]$, so $(X,Y)$ is a feasible solution for $\OPTL(I\arr{1}{\ell})$. Since $\Sum(X) = \sum_{i=1}^{n-1} i = {n \choose 2} = \Sum(Y)$ we have $R(X,Y) = 1$ and thus $\OPTL(I\arr{1}{\ell}) = 1$.
  \item $\OPTL(I\arr{\ell-\polylog 1/\eps}{\ell}) \gg 1$ assuming $n \gg \polylog 1/\eps$. \\ Indeed, note that $\polylog 1/\eps$ many items below $n$ sum to at most $n \cdot \polylog 1/\eps \ll {n \choose 2}$. Thus, for any disjoint subsets $X,Y$ with $\max(X \cup Y) = I[\ell]$ one of the sets sums to at least ${n \choose 2}$ and the other set sums to at most $n \cdot \polylog 1/\eps \ll {n \choose 2}$. It follows that their ratio is $\gg 1$. 
\end{itemize}
So indeed this is an example where for some value $\ell$ we have $\OPTL(I\arr{\ell - \polylog 1/\eps}{\ell}) \gg \OPTL(I\arr{1}{\ell})$.

%
Nevertheless, equation (\ref{eq:reduction}) still holds in this example (and in general, as we prove below). To see this, we shift our attention to $\ell' = n-1$, and observe that:
\begin{itemize}
  \item $\OPTL(I\arr{\ell'-3}{\ell'}) = 1$. \\ Indeed, $X = \{n-1, n-4\}$ and $Y = \{n-2,n-3\}$ are disjoint subsets of $I\arr{\ell'-3}{\ell'}$ and both sum to $2n-5$. Since $X$ contains $I[\ell']$, $(X,Y)$ is a feasible solution for $\OPTL(I\arr{\ell'-3}{\ell'})$, we obtain $\OPTL(I\arr{\ell'-3}{\ell'}) \le R(X,Y) = 1$.
\end{itemize}
  This demonstrates that equation (\ref{eq:reduction}) holds in this example. In fact, keeping the largest 4 items would have sufficed in this example. Below we prove that in general keeping the largest $\polylog 1/\eps$ items suffices.
  This example illustrates that the proof of equation (\ref{eq:reduction}) cannot start from (\ref{eq:optl}) and consider each subproblem $I\arr{1}{\ell}$ in isolation; instead in some cases we must change our focus to a different subproblem $I\arr{1}{\ell'}$.
  }
\end{example}


\medskip
The remainder of this section is devoted to the proof of Lemma~\ref{lem:reduction}.
We start by 
applying the pigeonhole principle to integral instances of \SSR.

\begin{obs}[Pigeonhole] \label{obs:pigeon}
  For any (multi-)set $I \subseteq \NN$ with $2^{|I|} > \Sum(I) + 1$ we have $\OPT(I) = 1$.
\end{obs}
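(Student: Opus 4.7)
The plan is to apply a direct pigeonhole argument on the $2^{|I|}$ many sub(-multi)sets of $I$. Writing $n = |I|$, I would consider the map $S \mapsto \Sum(S)$ defined on all $2^n$ subsets $S \subseteq I$ (for a multiset, treating each occurrence as a distinguishable element, so that we still have $2^n$ choices). Each image is an integer in the range $\{0,1,\ldots,\Sum(I)\}$, giving exactly $\Sum(I)+1$ possible values. Since $2^n > \Sum(I)+1$ by hypothesis, two distinct subsets $S_1 \ne S_2$ must collide, i.e.\ $\Sum(S_1) = \Sum(S_2)$.

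Next, I would form the symmetric-difference sets $X := S_1 \setminus S_2$ and $Y := S_2 \setminus S_1$. By construction these are disjoint sub(-multi)sets of $I$, and subtracting the sum of the common part $S_1 \cap S_2$ from both sides of $\Sum(S_1) = \Sum(S_2)$ immediately yields $\Sum(X) = \Sum(Y)$. Because $S_1 \ne S_2$, at least one of $X,Y$ is non-empty, and since the elements of $I$ are positive integers (the case $0 \in I$ can be excluded, as $I \subseteq \RRp$ in this paper), having a non-empty side with sum equal to the other forces both $X$ and $Y$ to be non-empty with a common positive sum.

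Finally, the pair $(X,Y)$ is feasible for $\OPT(I)$ and satisfies $R(X,Y) = \max\{\Sum(X)/\Sum(Y), \Sum(Y)/\Sum(X)\} = 1$. Combined with the trivial bound $R(X,Y) \ge 1$ that holds for any feasible pair, this gives $\OPT(I) = 1$.

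The argument is essentially a one-liner, and I do not anticipate a real obstacle; the only small care needed is the multiset case, which is handled cleanly by distinguishing repeated occurrences so that the count $2^n$ of subsets is unchanged. Everything else is bookkeeping.
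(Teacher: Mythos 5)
Your proof is correct and follows essentially the same approach as the paper's: pigeonhole on the $2^{|I|}$ subset sums (each lying in $\{0,\ldots,\Sum(I)\}$), then pass to the set differences to obtain disjoint $X,Y$ with equal sum. The only minor difference is the multiset handling: you treat occurrences as distinguishable so that a single unified pigeonhole argument covers the multiset case, whereas the paper dispatches multisets separately at the start (a repeated element $z$ immediately gives $X=Y=\{z\}$ with ratio $1$) and then assumes $I$ is a set; both are clean and equivalent in effect.
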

\begin{proof}
  If $I$ is a multiset and any item appears at least twice in $I$, then by setting $X$ and $Y$ to contain only this item we obtain ratio $R(X,Y) = 1$, which implies $\OPT(I) = 1$. 
  So assume that $I$ is a set.
  Since each of the $2^{|I|}$ subsets of $I$ sums to a number in $\{0,\ldots,\Sum(I)\}$, by the pigeonhole principle there exist two different subsets $X',Y' \subseteq I$ with $\Sum(X') = \Sum(Y')$. We remove their intersection by considering the sets $X := X' \setminus (X' \cap Y')$ and $Y :=  Y' \setminus (X' \cap Y')$. Note that $X,Y$ are disjoint and satisfy $\Sum(X) = \Sum(X') - \Sum(X' \cap Y') = \Sum(Y') - \Sum(X' \cap Y') = \Sum(Y)$, proving that $R(X,Y) = 1$ and thus $\OPT(I) = 1$.
\end{proof}

We are now ready to prove Lemma~\ref{lem:reduction}.

\begin{proof}[Proof of Lemma~\ref{lem:reduction}]
  Recall that for any $x \ge 1$ we define the number $\LL(x)$ to be the smallest integer with $2^{\LL(x)} > \LL(x)\cdot x + 1$. Our goal is to show 
  \begin{align} \label{eq:reductiontoshow} 
    \OPT(I) \le \min_{1 \le \ell \le n} \OPTL(I\arr{\ell-\LL^2(4/\eps)}{\ell}) \le (1+\eps) \OPT(I).
  \end{align}
  Since for any $1 \le \ell \le n$ the set $I\arr{\ell - \LL^2(4/\eps)}{\ell}$ is a subset\footnote{Recall that we defined the notation $I\arr{i}{j} := I\arr{\max\{1,i\}}{\min\{n,j\}}$, so $I\arr{i}{j} \subseteq I$ always holds.} of $I$, by definition of $\OPT$ and $\OPTL$ we clearly have
  \[ \OPT(I) \le \OPT(I\arr{\ell - \LL^2(4/\eps)}{\ell}) \le \OPTL(I\arr{\ell - \LL^2(4/\eps)}{\ell}). \]
  This proves the inequality $\OPT(I) \le \min_{1 \le \ell \le n} \OPTL(I\arr{\ell - \LL^2(4/\eps)}{\ell})$. It remains to prove 
  \begin{align} \label{eq:toprove}
    \min_{1 \le \ell \le n} \OPTL(I\arr{\ell - \LL^2(4/\eps)}{\ell}) \le (1+\eps) \OPT(I).
  \end{align}
  We split the proof of this inequality into two cases. The first case will consider the situation that $I$ is locally dense, from which we will infer that $\OPTL(I\arr{\ell - \LL^2(4/\eps)}{\ell}) \le 1+\eps$ for \emph{some} $\ell$. Since $\OPT(I) \ge 1$ this yields~(\ref{eq:toprove}). In the second case we can then assume that $I$ is nowhere dense, and we will use this to show that $\OPTL(I\arr{\ell - \LL^2(4/\eps)}{\ell}) \le (1+\eps) \OPT(I\arr{1}{\ell})$ for \emph{all} $\ell$. Combining this with equation~(\ref{eq:optl}) then yields~(\ref{eq:toprove}). More details follow.
  
  \emph{Case 1: $I$ is locally dense, that is, there exists $z \in \RRp$ such that $|I \cap [z/2,z]| \ge \LL(4/\eps)$.} 
  We let $J \subseteq I \cap [z/2,z]$ consist of the $\LL(4/\eps)$ largest numbers in $I \cap [z/2,z]$. 
  Then we round up the items in $J$ to multiples of $\beta := \eps z / 4$, obtaining a multiset $\widetilde J$. Then the scaled set $\widetilde J / \beta$ consists of integers in the range $[1,4/\eps]$. The sum of all elements of $\widetilde J/\beta$ is thus bounded by $\Sigma(\widetilde J/\beta) \le |J| \cdot 4/\eps = \LL(4/\eps) \cdot 4/\eps$. By definition of $\LL(4/\eps)$ we thus have 
  $$ 2^{|\widetilde J/\beta|} = 2^{|J|} = 2^{\LL(4/\eps)} > \LL(4/\eps) \cdot 4/\eps + 1 \ge \Sigma(\widetilde J/\beta)+1. $$ 
  Hence, the Pigeonhole Observation~\ref{obs:pigeon} is applicable to the set $\widetilde J/\beta$ and yields $\OPT(\widetilde J/\beta) = 1$. This implies $\OPT(\widetilde J) = 1$, in other words, there exist disjoint $\widetilde X, \widetilde Y \subseteq \widetilde J$ with ratio $R(\widetilde X, \widetilde Y) = 1$. 
  
  Let $X,Y$ be the subsets of $J$ corresponding to $\widetilde X, \widetilde Y \subseteq \widetilde J$ before we rounded $J$ to $\widetilde J$. Since we rounded up to multiples of $\eps z/4$ and each element of $X,Y$ lies in $[z/2,z]$, we have $\Sigma(X) \le \Sigma(\widetilde X) \le (1 + \eps) \cdot \Sigma(X)$ and $\Sigma(Y) \le \Sigma(\widetilde Y) \le (1 + \eps) \cdot \Sigma(Y)$. For their ratio we thus obtain
  $$ 1 = \frac{\Sigma(\widetilde X)}{\Sigma(\widetilde Y)} \ge \frac{\Sigma(X)}{(1 + \eps) \cdot \Sigma(Y)}, $$
  which rearranges to $\Sigma(X) / \Sigma(Y) \le 1+\eps$. We can symmetrically show $\Sigma(Y) / \Sigma(X) \le 1+\eps$, and thus $R(X,Y) \le 1+\eps$.
  
  Let $\ell$ be such that $I[\ell] = \max(X \cup Y)$. Then we have $X,Y \subseteq I\arr{\ell - \LL(4/\eps)}{\ell}$ (since $X,Y \subseteq J$ and $J$ consists of $\LL(4/\eps)$ consecutive numbers in $I$). Thus, $(X,Y)$ is a feasible solution for the problem $\OPTL(I\arr{\ell - \LL(4/\eps)}{\ell})$, so we obtain
  $$\OPTL(I\arr{\ell - \LL(4/\eps)}{\ell}) \le R(X,Y) \le 1+\eps \le (1+\eps) \OPT(I), $$ 
  where the last inequality used $\OPT(I) \ge 1$. 
  Since $\LL(4/\eps) \le \LL^2(4/\eps)$, we have proved inequality~(\ref{eq:toprove}):
  \begin{align*}
    \min_{1 \le \ell' \le n} \OPTL(I\arr{\ell' - \LL^2(4/\eps)}{\ell'}) 
    &\le \OPTL(I\arr{\ell - \LL^2(4/\eps)}{\ell}) \\ 
    &\le \OPTL(I\arr{\ell - \LL(4/\eps)}{\ell}) \le (1+\eps) \OPT(I).
  \end{align*}
  
  
  \emph{Case 2: $I$ is nowhere dense, that is, for all $z \in \RRp$ we have $|I \cap [z/2,z]| < \LL(4/\eps)$.} 
  In this case we show that $\OPTL(I\arr{\ell - \LL^2(4/\eps)}{\ell}) \le (1+\eps) \OPTL(I\arr{1}{\ell})$ holds for \emph{all} $\ell$. This is clear for $\ell \le \LL^2(4/\eps)$, as then $I\arr{\ell - \LL^2(4/\eps)}{\ell} = I\arr{1}{\ell}$. So let $\LL^2(4/\eps) < \ell \le n$. 
  By the case assumption (i.e. $|I \cap [z/2,z]| < \LL(4/\eps)$ for all $z \in \RRp$) it follows that $I[\ell - 1 - k \cdot \LL(4/\eps)] \le 2^{-k} I[\ell - 1]$ for any $k \in \NN$. In particular, we have 
  $$ I[\ell - 1 - \LL^2(4/\eps)] \le 2^{-\LL(4/\eps)} I[\ell-1] < \frac \eps{4 \LL(4/\eps)} I[\ell-1], $$ 
  by definition of $\LL(4/\eps)$.
  Moreover, for any $z \in \RRp$ we have $\Sum(I \cap (z/2,z]) \le z \cdot |I \cap (z/2,z]| < z \LL(4/\eps)$. By a geometric series we thus obtain 
  \[ \Sum(I \cap (0,z]) = \sum_{i \ge 0} \Sum(I \cap (z/2^{i+1},z/2^i]) < \sum_{i \ge 0} 2^{-i} z \LL(4/\eps) = 2 z \LL(4/\eps). \]
  Combining these facts yields 
  \begin{align*} 
    \Sum(I\arr{1}{\ell - 1 - \LL^2(4/\eps)}) &= \Sum(I \cap (0, I[\ell - 1 - \LL^2(4/\eps)]]) \\ 
    &\le 2 \LL(4/\eps) \cdot I[\ell - 1 - \LL^2(4/\eps)] \le \tfrac \eps 2 \cdot I[\ell-1]. 
  \end{align*}
  Now consider an optimal solution for $\OPTL(I\arr{1}{\ell})$, that is, disjoint $X,Y \subseteq I\arr{1}{\ell}$ with $\max(X \cup Y) = I[\ell]$ attaining $R(X,Y) = \OPTL(I\arr{1}{\ell})$. Let $X' := X \cap I\arr{\ell - \LL^2(4/\eps)}{\ell}$ and $Y' := Y \cap I\arr{\ell - \LL^2(4/\eps)}{\ell}$. Note that $\Sum(X) - \Sum(X') \le \Sum(I\arr{1}{\ell - 1 - \LL^2(4/\eps)}) \le \tfrac \eps 2 \cdot I[\ell-1]$, and thus
  \[ \Sum(X') \le \Sum(X) \le \Sum(X') + \tfrac \eps 2 \cdot I[\ell-1] \qquad \text{and similarly} \qquad \Sum(Y') \le \Sum(Y) \le \Sum(Y') + \tfrac \eps 2 \cdot I[\ell-1]. \]
  We can thus bound the ratio $R(X',Y')$ by
  \begin{align} \label{eq:tocontinue}
    R(X',Y') = \max\left\{ \frac{\Sum(X')}{\Sum(Y')}, \frac{\Sum(Y')}{\Sum(X')} \right\} 
    \le \max\left\{ \frac{\Sum(X)}{\Sum(Y) - \tfrac \eps 2 \cdot I[\ell-1]}, \frac{\Sum(Y)}{\Sum(X) - \tfrac \eps 2 \cdot I[\ell-1]} \right\}. 
  \end{align}
  By symmetry we can assume that $\Sum(X) \ge \Sum(Y)$, so $R(X,Y) = \Sum(X)/\Sum(Y)$. Since $\max(X \cup Y) = I[\ell]$ we thus have $\Sum(X) \ge I[\ell]$. 
  Since $X'' = \{I[\ell]\}, Y'' = \{I[\ell-1]\}$ form a feasible solution with ratio $I[\ell]/I[\ell-1]$, the optimal solution $X,Y$ satisfies $R(X,Y) = \Sum(X)/\Sum(Y) \le I[\ell]/I[\ell-1]$. Rearranging this yields $\Sum(Y) \ge \Sum(X) \cdot I[\ell-1] / I[\ell] \ge I[\ell-1]$.
  Hence, $\Sum(X) \ge \Sum(Y) \ge I[\ell-1]$.
  Using this bound in (\ref{eq:tocontinue}) yields
  \[ R(X',Y') \le \frac{\Sum(X)}{\Sum(Y) \cdot(1 - \tfrac \eps 2)} = \frac{R(X,Y)}{1 - \tfrac \eps 2} \le (1+\eps) R(X,Y), \]
  where we used $1/(1- \tfrac \eps 2) \le 1+\eps$ for any $\eps \in (0,1)$.
  Since $(X',Y')$ is a feasible solution for the problem $\OPTL(I\arr{\ell - \LL^2(4/\eps)}{\ell})$, we obtain 
  $$ \OPTL(I\arr{\ell - \LL^2(4/\eps)}{\ell}) \le R(X',Y') \le (1+\eps) R(X,Y) = (1+\eps) \OPTL(I\arr{1}{\ell}). $$
  Since this holds for all $1 \le \ell \le n$, we have
  $$ \min_{1 \le \ell \le n} \OPTL(I\arr{\ell - \LL^2(4/\eps)}{\ell})
  \le (1+\eps) \min_{1 \le \ell \le n} \OPTL(I\arr{1}{\ell}) \stackrel{(\ref{eq:optl})}{=} (1+\eps) \OPT(I). $$
  Thus, also in this case we established inequality~(\ref{eq:toprove}), which finishes the proof.
\end{proof}

\section{Solving \SSRL with Sublinear Dependence on \boldmath$1/\eps$} \label{sec:improved}

For the whole section we are given an \SSRL instance $I \subset \RRp$ of size $n$ and a number $\eps \in (0,1)$. In light of Lemma~\ref{lem:reduction} reducing the number of items to $\polylog 1/\eps$, we do not care about the running time dependence on $n$; any polynomial dependence is fine. However, our goal is to decrease the running time dependence on $1/\eps$ to sublinear, more precisely to $1/\eps^{0.93856}$. In other words, in this section we want to design an \SSRL algorithm running in time $\Os(1/\eps^{0.93856})$, where the $\Os$ notation hides polynomial factors in $n$.

\subsection{Algorithm Overview}
\label{sec:overview}

Our algorithm is an intricate combination of a pigeonhole argument with ideas from rounding, sumsets, and computational geometry. 

\paragraph{A Failed Approach}
Let us first discuss a simple approach and why it fails. 
We want to use two ideas. Idea (1): It is safe to round all items to multiples of $\max(I) \cdot \eps/n$. Indeed, we want to solve an \SSRL instance and thus are only interested in subsets containing the largest item $\max(I)$, and rounding to multiples of $\max(I) \cdot \eps/n$ changes the ratio $R(X,Y)$ of sets $X,Y$ with $\max(I) \in X \cup Y$ by at most a factor $1+\eps$. 
Idea (2): The pigeonhole principle guarantees the existence of two equal subset sums if we have more than a logarithmic number of items. In particular, it turns out that if after rounding we have more than $\log(\Theta(n^2/\eps))$ items, then two disjoint subsets have equal rounded sum. 
Thus, we are done unless $n \le \log(\Theta(n^2/\eps))$, which solves to $n \le \log 1/\eps + \Theta(\log \log 1/\eps)$. 
But then $n$ is small enough to run an exact algorithm in time $O(1.7088^n) = O(1/\eps^{0.773})$, which is sublinear in $1/\eps$ as desired. 

The issue is that ideas (1) and (2) do not work together nicely. Indeed, the equal subset sums that are guaranteed by the pigeonhole principle not necessarily contain $\max(I)$, and thus can have much smaller sum than $\max(I)$. But if they have much smaller sum, then the rounding with respect to multiples of $\max(I) \cdot \eps/n$ was too coarse to say anything about their true sums even if their rounded sums are equal.\footnote{It is easy to construct instances where this situation appears. Consider for example $\eps = 1/100$ and $I = \{100^{0}, 100^{1}, \ldots, 100^{8}, 100^{9}, 100^{100} \}$, where after rounding to multiples of $\max(I) \cdot \eps/n$ all items (except $\max(I)$) are rounded to the same value, and the pigeonhole principle yields a trivial solution on the rounded instance, but the unrounded instance has no solution with ratio $<50$.}

Therefore, this simple approach fails. Our actual approach is significantly more complicated and subtle, and it makes the combination of ideas (1) and (2) work despite the issues with the simple approach, using a variety of additional tricks. It can be instructive to consider Lemma~\ref{lem:refinedpigeon}, which guarantees that some ``large'' item (i.e., some item in $T$) is selected by the pigeonhole solution, and thus can be seen as a remedy for the issue that the pigeonhole principle does not guarantee to select the maximum item.

\paragraph{Our Actual Approach}

We start by splitting $I$ into its largest $\tau$ items $T$ (top) and the remaining items $B = I\setminus T$ (bottom). 
In one part of our algorithm, we would like to perform meet in the middle on $T$ and~$B$. Meet in the middle is standard for the Equal Subset Sum problem: Compute the sets $P' := \{\Sum(X) - \Sum(Y) \mid \text{disjoint } X,Y \subseteq T\}$ and $Q' := \{\Sum(X) - \Sum(Y) \mid \text{disjoint } X,Y \subseteq B\}$ in sorted order. Then for each $p \in P'$ perform binary search over $Q'$ to check whether $p \in Q'$. If we find a match, then we obtain sets $X_T,Y_T \subseteq T$ and $X_B, Y_B \subseteq B$ with $\Sum(X_T) - \Sum(Y_T) = \Sum(Y_B) - \Sum(X_B)$. The sets $X = X_T \cup X_B$ and $Y = Y_T \cup Y_B$ thus form an Equal Subset Sum solution. 
However, for Subset Sum Ratio there is an additional difficulty: Sets with closer sums $\Sum(X_T) - \Sum(Y_T)$ and $\Sum(Y_B) - \Sum(X_B)$ could have smaller sums $\Sum(X_T) + \Sum(X_B)$ and $\Sum(Y_T) + \Sum(Y_B)$, and thus could have a worse ratio. Therefore, the goal is not simply to find two numbers in $P'$ and $Q'$ that are as close as possible, as in general there can be a tradeoff between closeness and magnitude. 
It turns out that to generalize meet in the middle we have to consider the point sets $P := \{(\Sum(X) - \Sum(Y), \Sum(X) + \Sum(Y)) \mid \text{disjoint } X,Y \subseteq T\}$ and $Q := \{(\Sum(X) - \Sum(Y), \Sum(X) + \Sum(Y)) \mid \text{disjoint } X,Y \subseteq B\}$. (Moreover, if two points in $P$ have the same $x$-coordinate then we can remove the lower of the two points from~$P$, similarly for $Q$.) The goal then turns out to be maximizing $\frac{p_y + q_y}{|p_x+q_x|}$ over all points $(p,q) \in P \times Q$. By further rearrangements, we reduce this to a computational geometry problem on slopes and convex hulls that can be solved in near-linear time. 


The resulting algorithm solves \SSRL in time $\Os(|P| + |Q|)$. For $P$ we use the simple bound $|P| \le 3^\tau$, since each of the $\tau$ elements of $T$ can be in $X$ or in $Y$ or in neither of them. 
To be able to bound $|Q|$, we need to combine this algorithm with rounding. It turns out that we can afford to round items to multiples of $\frac \eps n \max(T)$ while still computing a $(1+\eps)$-approximation.
Since all coordinates of points in $Q$ are bounded by $O(n \cdot \max(B)) \le O(n \cdot \min(T))$, after this rounding we obtain a bound of $|Q| = \Os(1/\eps \cdot \min(T)/\max(T))$, which is sublinear in $1/\eps$ whenever $\min(T) \ll \max(T)$. This provides one situation in which we have a sufficiently good bound on $|Q|$.
Another such situation happens if (after rounding) $B$ creates few different subset sums. We can bound $|Q|$ by the square of the number of subset sums of $B$, and thus we obtain a fast algorithm if the number of different subset sums is small.

The remaining case is that $\min(T)$ is not much smaller than $\max(T)$ and (the rounded) $B$ generates many different subset sums. In this case, we make use of a refined pigeonhole argument: We show that if the product of the number of subset sums of (the rounded) $B$ and the number of subset sums of (the rounded) $T$ is large enough, then they are guaranteed to have a (rounded) solution $\tX,\tY$ with ratio $R(\tX,\tY)=1$ containing at least one item of $T$. Before rounding, these sets $\tX,\tY$ then correspond to sets $X,Y$ with $R(X,Y) \le 1+\eps$. 
We show that this pigeonhole argument is always applicable in the remaining case that we need to consider. 
Since ratio $R(\tX,\tY)=1$ means $\Sum(\tX) = \Sum(\tY)$, in this part of the algorithm we can use standard meet in the middle to actually find such sets $\tX,\tY$.

A major technical difficulty is that the pigeonhole argument does not guarantee a solution containing $\max(I)$ --- it only guarantees picking some item in $T$. This in particular means that in the pigeonhole-based algorithm we can only afford to round to multiples of $\frac \eps n \min(T)$ (instead of $\frac \eps n \max(T)$ as in the geometric algorithm). That the two parts of the algorithm use different roundings causes significant difficulties.

Finally, our analysis of the above algorithms only works when the optimal value is at most a constant. We handle the case of large optimal value by presenting an algorithm for \SSRL that computes a solution of ratio at most $\max\{\sqrt{2}, \OPTL(I)\}$. This yields a simple $\sqrt{2}$-approximation algorithm, but more importantly it solves instances with optimal value at least $\sqrt{2}$ to optimality.

\paragraph{Organization}
After preparations in Section~\ref{sec:prepsumsets}, we present our refined pigeonhole argument and its application to \SSRL in Section~\ref{sec:refinedpigeon}. Then we present our generalized meet in the middle algorithm using computational geometry in Section~\ref{sec:compgeom}.  After preparations on rounding in Section~\ref{sec:preprounding} we then combine the two parts in Section~\ref{sec:combination}. Finally, we handle the case of large optimal value in Section~\ref{sec:removerequirement}.

\subsection{Preparations on Sumsets}
\label{sec:prepsumsets}

Throughout this section we use $\Os$ notation to suppress any $\textup{poly}(n)$ factors (as in our final running time they will only contribute $\polylog(1/\eps)$ factors).

We denote by $\cS(Z)$ the set (not multi-set!) of all subset sums of $Z$, that is,
\[ \cS(Z) := \{ \Sum(Y) \mid Y \subseteq Z \}. \]
We will make use of the standard fact that $\cS(Z)$ can be computed in output-sensitive time:
\begin{lemma} \label{lem:computecS}
  Given $Z \subset \RRp$ of size $n$, we can compute $\cS(Z)$ in time $\Os(|\cS(Z)|)$. Moreover, given $Z \subset \RRp$ of size $n$ and an integer $S$, we can decide whether $|\cS(Z)| \ge S$ in time $\Os(S)$.
\end{lemma}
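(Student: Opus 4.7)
The plan is to use the straightforward incremental construction of subset sums, amortizing the per-step cost against the final output size.

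First, label the elements of $Z$ as $z_1, \ldots, z_n$ in any order. Define $\cS_0 := \{0\}$ and inductively $\cS_i := \cS_{i-1} \cup (\cS_{i-1} + z_i)$, so $\cS_n = \cS(Z)$. I would maintain each $\cS_i$ as a sorted array (or balanced BST) of distinct reals. To go from $\cS_{i-1}$ to $\cS_i$, I form the shifted sorted list $\cS_{i-1} + z_i$ in linear time and then merge the two sorted lists, suppressing duplicates; this takes time $O(|\cS_{i-1}|)$ plus an $O(\log n)$ factor for the arithmetic/comparison cost in the Real RAM sense. Since each new set is a subset of $\cS(Z)$, we have $|\cS_i| \le |\cS(Z)|$ for every $i$, so summing over $i$ the total running time is $O\bigl(n \cdot |\cS(Z)|\bigr) = \Os(|\cS(Z)|)$, which proves the first claim.

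For the second claim, I would run exactly the same algorithm but abort as soon as the running size crosses the threshold~$S$. Two cases: either some $\cS_i$ first reaches size $\ge S$, in which case we stop and report ``yes'', or we complete all $n$ steps without ever crossing $S$, in which case every $\cS_i$ has size $<S$ and we report $|\cS(Z)| = |\cS_n|$, which is $<S$, so the answer is ``no''. In the first case, since $|\cS_i| \le 2|\cS_{i-1}|$, at the moment of crossing we have $|\cS_i| < 2S$, so every intermediate list processed had size $O(S)$ and the total cost is $\Os(n \cdot S) = \Os(S)$. In the second case the total cost is trivially $\Os(n \cdot S) = \Os(S)$.

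There is essentially no conceptual obstacle, only the routine care of performing the merges in sorted order and suppressing duplicates (so that each $\cS_i$ is a set, not a multiset, and the bound $|\cS_i| \le |\cS(Z)|$ holds); duplicates are detected cheaply during the merge by comparing adjacent equal values. The only mildly subtle point is the accounting for the decision variant: aborting at the first step where size $\ge S$ keeps the work up to that step proportional to $S$ because sizes can at most double per step, so the \emph{last} processed set has size $<2S$ and every earlier one is smaller.
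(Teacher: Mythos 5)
Your proof is correct and follows essentially the same incremental construction as the paper: build $\cS(Z_i)$ from $\cS(Z_{i-1})$ by a sorted merge, bound the total by $O(n\cdot|\cS(Z)|)$, and for the decision variant abort once the size hits $S$, using the fact that each step at most doubles the size to bound the last set by $O(S)$. No substantive difference from the paper's argument.
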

\begin{proof}
  Write $Z = \{z_1,\ldots,z_n\}$ and $Z_i := \{z_1,\ldots,z_i\}$. Note that we can compute $\cS(Z_1) = \{0,z_1\}$ in constant time, and we can compute $\cS(Z_i)$ from $\cS(Z_{i-1})$ as
  \begin{align} \label{eq:cSfromprevious}
    \cS(Z_i) = \cS(Z_{i-1}) \cup \{s + z_i \mid s \in \cS(Z_{i-1})\}. 
  \end{align}
  If we know $\cS(Z_{i-1})$ in sorted order, then we know both sets on the right hand side in sorted order, so by one merge operation we can compute $\cS(Z_i)$ in time $O(|\cS(Z_i)|)$. Hence, we can compute $\cS(Z)$ in time $O(\sum_{i=1}^n |\cS(Z_i)|) = O(n \cdot |\cS(Z)|) = \Os(|\cS(Z)|)$.
  
  To decide $|\cS(Z)| \ge S$ we run the above algorithm, but we abort once we computed a set $\cS(Z_i)$ of size at least $S$. Since equation (\ref{eq:cSfromprevious}) implies that $|\cS(Z_i)| \le 2 |\cS(Z_{i-1})|$, we abort with a set of size $|\cS(Z_i)| \le 2S$. The running time is thus $O(n |\cS(Z_i)|) = O(n S) = \Os(S)$.
\end{proof}

\begin{definition} \label{def:pointsets}
  For a point in the plane $p \in \RR^2$ we write $p_x$ and $p_y$ for its $x$- and $y$-coordinates.
  We say that point $q \in \RR^2$ \emph{dominates} point $p \in \RR^2$ if $q_x = p_x$ and $q_y > p_y$. 
  For a set of points in the plane $P \subset \RR^2$ we denote the set of all non-dominated points in $P$ by $\opU(P)$, that is,
\[ \opU(P) := \{ p \in P \mid \not\exists q \in P \colon q_x=p_x \text{ and } q_y > p_y \}. \]
  For a set $I \subset \RRp$, we define the point sets
  \begin{align*} 
    P(I) &:= \opU( \{(\Sum(X) - \Sum(Y), \Sum(X) + \Sum(Y)) \mid \text{disjoint } X,Y \subseteq I \} ), \\
    P_L(I) &:= \opU( \{(\Sum(X) - \Sum(Y), \Sum(X) + \Sum(Y)) \mid \text{disjoint } X,Y \subseteq I, \, \max(I) \in X \cup Y \} ), \\
    P_N(I) &:= \opU( \{(\Sum(X) - \Sum(Y), \Sum(X) + \Sum(Y)) \mid \text{disjoint } X,Y \subseteq I, \, X \cup Y \ne \emptyset \} ).
  \end{align*}
\end{definition}

\begin{lemma}[Point Set Generation] \label{lem:sumgen}
  Given a set $I \subset \RRp$ of size $n$, we can compute the point sets $P(I), P_L(I)$, and $P_N(I)$ in sorted order (sorted by $x$-coordinate) in time $\Os(|P(I)|)$.
\end{lemma}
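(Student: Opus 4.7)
The plan is to compute $P(I)$ incrementally by processing items one at a time, and then to read off the other two point sets from intermediate data. Write $I = \{z_1, \ldots, z_n\}$ and let $P_i := P(\{z_1, \ldots, z_i\})$, so $P_0 = \{(0,0)\}$ and $P_n = P(I)$. Every disjoint pair of subsets of $\{z_1, \ldots, z_i\}$ either ignores $z_i$, places $z_i$ in $X$, or places $z_i$ in $Y$; this shifts the associated $(a,b)$-pair by $(0,0)$, $(z_i, z_i)$, or $(-z_i, z_i)$, respectively. Hence
\[
  P_i = \opU\bigl( P_{i-1} \cup (P_{i-1} + (z_i, z_i)) \cup (P_{i-1} + (-z_i, z_i)) \bigr).
\]
Given $P_{i-1}$ sorted by $x$-coordinate, each of the three shifted copies is already sorted with the same relative order, so a three-way merge produces a sorted list of length at most $3|P_{i-1}|$ in time $O(|P_{i-1}|)$; a single linear scan then keeps, among points sharing an $x$-coordinate, only the one with largest $y$-coordinate, yielding $P_i$ in sorted order.

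The main step, and the only nontrivial one, is bounding the size of the intermediate sets. I claim $|P_i| \le |P(I)|$ for every $i$. Indeed, if $(a, b) \in P_i$ then some disjoint $X, Y \subseteq \{z_1, \ldots, z_i\}$ witness $a = \Sum(X) - \Sum(Y)$; viewing the same pair as disjoint subsets of $I$ shows that the underlying point set of $P(I)$ contains a point with $x$-coordinate $a$, and since $\opU$ retains exactly one point per $x$-coordinate, $P(I)$ itself contains such a point. Thus the $x$-coordinates of $P_i$ inject into those of $P(I)$, so $|P_i| \le |P(I)|$. Summing the step costs gives total running time $\Os(n \cdot |P(I)|) = \Os(|P(I)|)$.

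For $P_L(I)$ I would reuse $P_{n-1}$ and force $M := \max(I) = z_n$ into $X$ or $Y$ by one additional application of the merge-and-dedup step:
\[
  P_L(I) = \opU\bigl( (P_{n-1} + (M, M)) \cup (P_{n-1} + (-M, M)) \bigr),
\]
which costs $O(|P_{n-1}|) = \Os(|P(I)|)$. For $P_N(I)$, observe that any pair of disjoint subsets with $X \cup Y \ne \emptyset$ has $y$-coordinate strictly positive, so the point $(0,0)$ appears in $P(I)$ if and only if no non-empty disjoint pair has equal sum. Consequently $P_N(I)$ is obtained from $P(I)$ by deleting $(0,0)$ if it is present, in linear time. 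The same injection argument as above gives $|P_L(I)|, |P_N(I)| \le |P(I)|$, so all three sets are produced in sorted order within the claimed budget.

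The hard part will be presenting the size bound $|P_i| \le |P(I)|$ convincingly: without it the incremental approach would only yield the trivial $\Os(3^n)$ bound from the raw size of the underlying multiset. Everything else — the three-way merges of sorted lists, the linear dedup scans, and the handling of $P_L$ and $P_N$ — is routine bookkeeping.
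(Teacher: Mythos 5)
Your proof is correct and follows essentially the same incremental approach as the paper: the same three-way merge-and-$\opU$ recursion for $P(I)$, the same size bound $|P_i| \le |P(I)|$ (which the paper uses implicitly but you justify explicitly via the $x$-coordinate injection), and the same one-extra-step construction for $P_L(I)$. Your treatment of $P_N(I)$ is a small simplification over the paper's recursive formula $P_N(I) = \opU\bigl(P_N(I\arr{1}{n-1}) \cup \cdots\bigr)$: observing that $P_N(I)$ equals $P(I)$ minus $(0,0)$ if present is neat and saves maintaining a second recursion, but otherwise the two proofs coincide.
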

\begin{proof}
  We generate $P(I)$ from $P' := P(I\arr{1}{n-1})$ as
  \[ P(I) = \opU\big(  P' \cup \{(p_x + I[n], p_y + I[n]) \mid p \in P'\} \cup \{(p_x - I[n], p_y + I[n]) \mid p \in P'\} \big). \]
  If all three sets on the right hand side are sorted by $x$-coordinate, we can generate $P(I)$ in sorted order by merging three sorted sets and removing dominated points. Thus, the running time for this step is $O(|P'|)$, and the total running time to compute $P(I)$ is $O\left(\sum_{i=1}^{n-1} |P(I\arr{1}{i})| \right) \le O(n \cdot |P(I)|)$. 
  
  We similarly generate the set $P_L(I)$ from $P' = P(I\arr{1}{n-1})$ as
  \[ P_L(I) = \opU\big( \{(p_x + I[n], p_y + I[n]) \mid p \in P'\} \cup \{(p_x - I[n], p_y + I[n]) \mid p \in P'\} \big). \]
  
  We similarly generate the set $P_N(I)$ from $P' = P(I\arr{1}{n-1})$ as $P_N(\emptyset) = \emptyset$ and for $n \ge 1$:
  \[ P_N(I) = \opU\big( P_N(I\arr{1}{n-1}) \cup \{(p_x + I[n], p_y + I[n]) \mid p \in P'\} \cup \{(p_x - I[n], p_y + I[n]) \mid p \in P'\} \big). \]
  
  Clearly the same running time bounds apply to $P_L(I)$ and $P_N(I)$.
\end{proof}

\subsection{A Refined Pigeonhole Principle and its Application to \SSRL}
\label{sec:refinedpigeon}

Our improved algorithm starts with the following refinement of the pigeonhole principle (which should be compared with the basic Observation~\ref{obs:pigeon}). 

\begin{lemma}[Refined Pigeonhole] \label{lem:refinedpigeon}
  Let $I \subseteq \NN$ be partitioned into $I = B \cup T$. If $|\cS(B)| \cdot |\cS(T)| > \Sum(B) + \Sum(T) + 1$, then there exist disjoint sets $X,Y \subseteq I$ with $\Sum(X) = \Sum(Y)$ and $(X \cup Y) \cap T \ne \emptyset$.
\end{lemma}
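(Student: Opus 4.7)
The plan is to run a pigeonhole argument on pairs of subset sums (one from $B$, one from $T$) rather than on single subset sums of $I$. The image of the ``combined sum'' map lies in an interval of size $\Sum(B)+\Sum(T)+1$, while the domain has size $|\cS(B)|\cdot|\cS(T)|$, so the hypothesis guarantees a collision. The advantage over the basic pigeonhole (Observation~\ref{obs:pigeon}) is that the two preimages will differ in their $T$-component, which is exactly what is needed to force an element of $T$ into the final solution.

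\textbf{First step.} I would define the map $\phi : \cS(B)\times\cS(T) \to \{0,1,\ldots,\Sum(B)+\Sum(T)\}$ by $\phi(s_B,s_T) := s_B+s_T$. The codomain has $\Sum(B)+\Sum(T)+1$ elements, so the hypothesis $|\cS(B)|\cdot|\cS(T)| > \Sum(B)+\Sum(T)+1$ together with the pigeonhole principle yields two distinct pairs $(s_B,s_T)\ne(s'_B,s'_T)$ with $s_B+s_T = s'_B+s'_T$. The key observation to extract now is that $s_T\ne s'_T$: if we had $s_T=s'_T$ the equation would force $s_B=s'_B$, contradicting that the pairs are distinct.

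\textbf{Second step.} Using the definition of $\cS(\cdot)$, pick subsets $X_B,Y_B\subseteq B$ with $\Sum(X_B)=s_B,\Sum(Y_B)=s'_B$ and subsets $X_T,Y_T\subseteq T$ with $\Sum(X_T)=s_T,\Sum(Y_T)=s'_T$. Set $X' := X_B\cup X_T$ and $Y' := Y_B\cup Y_T$; these are subsets of $I$ (since $B,T$ partition $I$) and satisfy $\Sum(X')=\Sum(Y')$. To obtain disjoint sets, remove the common part: let $X := X'\setminus Y'$ and $Y := Y'\setminus X'$. Then $X,Y$ are disjoint subsets of $I$ and subtracting $\Sum(X'\cap Y')$ from both sides preserves $\Sum(X)=\Sum(Y)$.

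\textbf{Third step (where the refinement kicks in).} It remains to verify $(X\cup Y)\cap T\ne\emptyset$. Because $X'\cap T = X_T$ and $Y'\cap T = Y_T$ (as $X_B,Y_B\subseteq B$ are disjoint from $T$), the elements of $T$ surviving the intersection-removal are exactly the symmetric difference $X_T\triangle Y_T$. From $s_T\ne s'_T$ we get $X_T\ne Y_T$, hence $X_T\triangle Y_T\ne\emptyset$, and so $(X\cup Y)\cap T\ne\emptyset$, as required. I do not foresee a real obstacle here; the only subtle point is the distinctness-of-$T$-components argument that ensures the symmetric difference on $T$ is nonempty, and this falls out cleanly from equating the two sums.
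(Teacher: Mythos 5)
Your proof is correct and follows the same approach as the paper: pigeonhole on pairs in $\cS(B)\times\cS(T)$, extraction of a collision with distinct $T$-components, forming $X',Y'$ and removing the intersection, then arguing an element of $T$ survives. The only cosmetic difference is the final step — you identify $(X\cup Y)\cap T$ as the symmetric difference $X_T\triangle Y_T$, while the paper instead notes $\Sum(X\cap T)\ne\Sum(Y\cap T)$ so one of them must be nonempty; these are equivalent.
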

\begin{proof}
  Consider all pairs $(x_B,x_T) \in \cS(B) \times \cS(T)$. Note that for each such pair we have $x_B+x_T \in \{0,\ldots,\Sum(B)+\Sum(T)\}$. By the assumption $|\cS(B)| \cdot |\cS(T)| > \Sum(B) + \Sum(T) + 1$ and the pigeonhole principle, there exist two different pairs $(x_B,x_T),(y_B,y_T) \in \cS(B) \times \cS(T)$ with equal sum $x_B+x_T = y_B+y_T$. These pairs correspond to sets $X_B,Y_B \subseteq B, X_T,Y_B \subseteq T$ with $\Sum(X_B) = x_B$, $\Sum(X_T) = x_T$, $\Sum(Y_B) = y_B$, $\Sum(Y_T) = y_T$. 
  We let $X' := X_B \cup X_T$ and $Y' := Y_B \cup Y_T$. 
  Removing their intersection, we obtain disjoint sets $X := X' \setminus (X' \cap Y')$ and $Y := Y' \setminus (X' \cap Y')$. These disjoint sets have equal sum, since
  \begin{align*}
    \Sum(X) = \Sum(X') - \Sum(X' \cap Y') &= x_B + x_T - \Sum(X' \cap Y') \\
    &= y_B + y_T - \Sum(X' \cap Y') = \Sum(Y') - \Sum(X' \cap Y') = \Sum(Y). 
  \end{align*}
  Finally, we argue that $(X \cup Y) \cap T \ne \emptyset$. Since $x_B + x_T = y_B + y_T$ and the pairs $(x_B,x_T), (y_B,y_T)$ are different, it follows that $x_T \ne y_T$. We thus have
  \[ \Sum(X \cap T) = x_T - \Sum(X' \cap Y' \cap T) \ne y_T - \Sum(X' \cap Y' \cap T) = \Sum(Y \cap T). \]
  It follows that at least one of $X \cap T$ or $Y \cap T$ must be nonempty, and thus $(X \cup Y) \cap T \ne \emptyset$.
\end{proof}

We now use the refined pigeonhole principle to solve a very specific special case of the \SSRL problem.

\begin{lemma} \label{lem:mainpigeon}
  Let $I \subset \RRp$ be a (multi-)set of size $n$ consisting of numbers that are integer multiples of $\beta \in \RRp$, and let $\tau \in \mathbb{N}$. Let $T$ be the largest $\tau$ items of $I$, let $B := I \setminus T$ be the remaining items.
  If $|\cS(B)| > n \max(I) / (\beta 2^\tau) + 1$ then there exist disjoint $X,Y \subseteq I$ with $R(X,Y) = 1$ and $(X \cup Y) \cap T \ne \emptyset$, and we can compute such $X,Y$ in time $\Os(\max(I)^2 / (\beta^2 4^\tau) + 3^\tau)$.
\end{lemma}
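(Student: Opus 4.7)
The plan is to apply the Refined Pigeonhole Principle (Lemma~\ref{lem:refinedpigeon}) to the scaled integer instance $I/\beta$; existence of suitable $X, Y$ is then essentially free, and the remaining work is purely algorithmic. Write $M := \max(I)/\beta$, so $\Sum(I)/\beta \le nM$, and note $|\cS(Z/\beta)| = |\cS(Z)|$ for every $Z \subseteq I$.

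First I would dispose of the degenerate case $|\cS(T)| < 2^\tau$: two distinct subsets of $T$ then share a sum, so their symmetric differences give disjoint nonempty $X, Y \subseteq T$ with $R(X, Y) = 1$ and $(X \cup Y) \cap T \ne \emptyset$. Applying Lemma~\ref{lem:computecS} to $T$ with standard subset-sum backtracking produces the witnesses in time $\Os(2^\tau) \le \Os(3^\tau)$. From now on assume $|\cS(T)| = 2^\tau$, so that $|\cS(B)| \cdot |\cS(T)| > (nM/2^\tau + 1) \cdot 2^\tau \ge nM + 2^\tau > \Sum(I)/\beta + 1$ and Lemma~\ref{lem:refinedpigeon} already yields the required $X, Y$; it remains to find them efficiently.

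To actually locate such $X, Y$ within the budget $\Os(M^2/4^\tau + 3^\tau)$, set $S := \lceil nM/2^\tau \rceil + 2$. Since $|\cS(B)| \ge S$, Lemma~\ref{lem:computecS} produces in time $\Os(S)$ a subset $\cS' \subseteq \cS(B)$ of size $S$, where each $s \in \cS'$ carries a witness $B_s \subseteq B$ with $\Sum(B_s) = s$. Rerunning the pigeonhole argument on $\cS' \times \cS(T)$ is still valid since $S \cdot 2^\tau > \Sum(I)/\beta + 1$, so there exist distinct $(s, s_T) \ne (s', s_T')$ with $s + s_T = s' + s_T'$; equivalently $s - s' = s_T' - s_T$, and necessarily $s_T \ne s_T'$ (else $s = s'$ and the pairs coincide). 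I would then hash the difference set $D := \{s - s' : s, s' \in \cS'\}$, tagging each key with a witness $(s, s') \in \cS' \times \cS'$ achieving it, at cost $\Os(|\cS'|^2) = \Os(M^2/4^\tau)$. Then I enumerate the $3^\tau$ disjoint pairs $(X_T, Y_T) \subseteq T \times T$ and query $D$ with $d := \Sum(Y_T) - \Sum(X_T)$. Some query succeeds with $X_T \cup Y_T \ne \emptyset$: starting from any $(X_T^\ast, Y_T^\ast) \subseteq T \times T$ with $\Sum(X_T^\ast) = s_T$, $\Sum(Y_T^\ast) = s_T'$, its disjoint reduction $(X_T^\ast \setminus Y_T^\ast,\, Y_T^\ast \setminus X_T^\ast)$ is among the enumerated pairs, has the same nonzero difference $s_T' - s_T$, and has nonempty union because $s_T \ne s_T'$.

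Given a successful query, output $X := X_T \cup (B_s \setminus B_{s'})$ and $Y := Y_T \cup (B_{s'} \setminus B_s)$. Disjointness is immediate from $B \cap T = \emptyset$ together with $X_T \cap Y_T = \emptyset$; the equality $\Sum(X) = \Sum(Y)$ follows from $s + s_T = s' + s_T'$ after the symmetric cancellation of $\Sum(B_s \cap B_{s'})$; and both sets are nonempty because $X_T \cup Y_T \ne \emptyset$ forces the common sum to be strictly positive. The hardest step will be keeping the collision search within $\Os(M^2/4^\tau)$: the naive alternative of tabulating all $|\cS'| \cdot 2^\tau \approx nM$ pair sums costs $\Os(M)$, which is linear in $1/\eps$ after the reduction of Section~\ref{sec:simple} and thus too slow in the regime $4^\tau \gtrsim M$ where later applications balance $M^2/4^\tau$ against $3^\tau$. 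The difference-set trick swaps the factor $2^\tau$ for $|\cS'|$ via the rewriting $s + s_T = s' + s_T' \iff s - s' = s_T' - s_T$, and this is exactly what delivers the claimed $\Os(M^2/4^\tau + 3^\tau)$ total.
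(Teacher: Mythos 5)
Your proposal is correct and follows essentially the same plan as the paper's proof: scale by $\beta$, dispose of the case $|\cS(T)| < 2^\tau$, use Lemma~\ref{lem:computecS} to extract a prefix of $B$ with just enough subset sums, invoke the refined pigeonhole (Lemma~\ref{lem:refinedpigeon}) for existence, and find the guaranteed collision by meet in the middle on the difference set. The cosmetic differences are immaterial: the paper's step~1 enumerates all $3^\tau$ disjoint pairs of $T$ rather than detecting a collision while building $\cS(T)$, and the paper uses sorted lists with binary search where you propose hashing --- the former is preferable here since the paper works on the Real~RAM, where hashing arbitrary reals is not available, but both fit in $\Os(\cdot)$. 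Your construction $X = X_T \cup (B_s\setminus B_{s'})$, $Y = Y_T\cup(B_{s'}\setminus B_s)$ and the argument that some enumerated nonempty disjoint pair must match are both sound.

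One arithmetic slip worth flagging: with $M := \max(I)/\beta$, you set $S := \lceil nM/2^\tau\rceil + 2$ and then assert $|\cS(B)| \ge S$. The hypothesis $|\cS(B)| > nM/2^\tau + 1$ only gives $|\cS(B)| \ge \lfloor nM/2^\tau\rfloor + 2$, which is $S-1$ whenever $nM/2^\tau \notin \ZZ$, so Lemma~\ref{lem:computecS} might report $|\cS(B)| < S$ and fail to produce $\cS'$. Setting $S := \lfloor nM/2^\tau\rfloor + 2$ repairs this: then $|\cS(B)| \ge S$, and still $S\cdot 2^\tau > (nM/2^\tau + 1)\cdot 2^\tau = nM + 2^\tau \ge \Sum(I)/\beta + 1$, so the pigeonhole count on $\cS'\times\cS(T)$ goes through. (The paper sidesteps the issue by not introducing $S$ at all; it simply stops at the first prefix $B_j$ whose subset-sum count exceeds the hypothesis threshold $n\max(I)/(\beta 2^\tau) + 1$.) This is a one-character fix and does not affect the asymptotics or the structure of your argument.
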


\begin{proof}
Note that scaling all numbers by $\beta$ reduces to the case $\beta = 1$, so it suffices to prove the lemma in the case $\beta=1$, which is equivalent to $I \subset \mathbb{N}$.

The algorithms works as follows:
\begin{enumerate}
\item Enumerate all pairs $(X,Y)$ of disjoint subsets of $T$. This can be done naively in time $\Os(3^\tau)$, as each element of $T$ can go into $X$ or into $Y$ or into neither. If we find non-empty disjoint subsets $X,Y \subseteq T$ with $\Sum(X) = \Sum(Y)$, then we have $R(X,Y) = 1$ and thus we are done. 

Hence, from now on we can assume that no non-empty disjoint subsets $X,Y \subseteq T$ satisfy $\Sum(X) = \Sum(Y)$. 
This implies that there are no two different subsets $X',Y' \subseteq T$ with equal sum $\Sum(X') = \Sum(Y')$. Indeed, for any such sets $X',Y'$ we could remove their intersection, obtaining disjoint sets $X := X' \setminus (X' \cap Y')$ and $Y := Y' \setminus (X' \cap Y')$ with the same sum $\Sum(X) = \Sum(X') - \Sum(X' \cap Y') = \Sum(Y') - \Sum(X' \cap Y') = \Sum(Y)$. Hence, from now on no two subset sums of $T$ are equal, that is, we have $|\cS(T)| = 2^\tau$. 

\item Write $B = \{b_1,\ldots,b_m\}$ and $B_j := \{b_1,\ldots,b_j\}$. For $j = 1,2,\ldots$ compute the set $\cS(B_j)$ by Lemma~\ref{lem:computecS}, and stop at the first $j$ with $|\cS(B_j)| > n \max(I) / 2^\tau + 1$. Note that by the assumption in the lemma statement, we indeed stop for some $j \le m$. Since $|\cS(B_j)| \le 2 |\cS(B_{j-1})|$ (which follows from equation~(\ref{eq:cSfromprevious})), we also have $|\cS(B_j)| = O(n \max(I) / 2^\tau + 1)$.

\item Compute $P := \{\Sum(X) - \Sum(Y) \mid \text{disjoint } X,Y \subseteq B_j \}$ and $Q := \{\Sum(X) - \Sum(Y) \mid \text{disjoint } X,Y \subseteq T, X \cup Y \ne \emptyset \}$. These sets can be computed in time $\Os(|P|)$ and $\Os(|Q|)$, by applying Lemma~\ref{lem:sumgen} to compute $P(B_j)$ and $P_N(T)$ and then removing the $y$-coordinates of the computed points.

\item 
From $|\cS(B_j)| > n \max(I) / 2^\tau + 1$ and $|\cS(T)| = 2^\tau$ we obtain $|\cS(B_j)| \cdot |\cS(T)| > n \max(I) + 1 \ge \Sum(B_j) + \Sum(T) + 1$. That is, the Refined Pigeonhole Lemma~\ref{lem:refinedpigeon} is applicable, and it shows existence of disjoint sets $X,Y \subseteq B_j \cup T$ with $\Sum(X) = \Sum(Y)$ and $(X \cup Y) \cap T \ne \emptyset$. Consider $p := \Sum(X \cap B_j) - \Sum(Y \cap B_j) \in P$ and $q := \Sum(Y \cap T) - \Sum(X \cap T) \in Q$. Then $p - q = \Sum(X) - \Sum(Y) = 0$, so it follows that $P \cap Q \ne \emptyset$.

We use meet in the middle to find some $x \in P \cap Q$. In other words, we iterate over all $p \in P$ and for each $p$ we perform binary search over $Q$ to test whether $p \in Q$. 
After finding $x \in P \cap Q$, by appropriate bookkeeping during the construction of $P$ and $Q$ we can reconstruct disjoint sets $X_B, Y_B \subseteq B_j$ with $x = \Sum(X_B) - \Sum(Y_B)$, as well as disjoint sets $X_T,Y_T \subseteq T$ with $x = \Sum(Y_T) - \Sum(X_T)$ and $X_T \cup Y_T \ne \emptyset$. Setting $X := X_B \cup X_T$ and $Y := Y_B \cup Y_T$ now yields disjoint sets $X,Y \subseteq I$ with $R(X,Y) = 1$ and $(X \cup Y) \cap T \ne \emptyset$. This finishes the algorithm.
\end{enumerate}
We argued correctness in the algorithm description. Regarding running time, note that step 1 runs in time $\Os(3^\tau)$, step 2 runs in time $\Os(|\cS(B_j)|) = \Os(\max(I)/2^\tau + 1)$, and steps 3 and 4 run in time $\Os(|P| + |Q|)$. We can easily bound $|Q| \le 3^\tau$. For $|P|$ we observe that we can write each element of $P$ as a difference of two elements of $\cS(B_j)$, and thus $|P| \le |\cS(B_j)|^2 = O(n^2 \max(I)^2 / 4^\tau + 1)$. 
In total, we thus obtain running time $\Os(\max(I)^2 / 4^\tau + 3^\tau)$.
\end{proof}

\subsection{Generalized Meet in the Middle via Computational Geometry}
\label{sec:compgeom}

We will use some tools from computational geometry, specifically convex hulls and tangent lines of convex hulls. These tools can be used to determine the maximum slope of any line through two points in two given point sets, as shown by the following lemma.

\begin{lemma}[\cite{MehlhornR10}] \label{lem:mehlhorn}
  Given sets $P,Q \subset \RR^2$ of $n$ points in the plane, in time $O(n \log n)$ we can compute for each point $p \in P$ the maximum slope of a line through $p$ and a point $q \in Q$ to the right of $p$, that is, we compute
  \[ s_p := \max\left\{ \frac{q_y - p_y}{q_x - p_x} \;\bigg|\; \,q\in Q, \, p_x < q_x \right\} \text{ for each } p \in P. \]
\end{lemma}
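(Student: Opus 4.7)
The plan is to reduce the problem to maintaining the upper convex hull of a growing set of points, together with tangent queries. First I would sort $P\cup Q$ by $x$-coordinate in time $O(n\log n)$, breaking ties so that points of $P$ come \emph{after} points of $Q$ with the same $x$-value (so that when we process $p\in P$ the structure contains only strictly-right-of-$p$ points of $Q$). I then process the combined sorted list in \emph{decreasing} order of $x$, maintaining the upper convex hull $H$ of the set of $Q$-points processed so far.

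When a new $q\in Q$ is scanned, $q$ lies strictly to the left of every point currently stored in $H$, so it can be appended to the left end of $H$ by popping those leftmost vertices that violate the upper-hull orientation (a standard Graham/Andrew-style scan). Since each $Q$-point is pushed at most once and popped at most once, the cumulative maintenance cost across all $Q$-insertions is $O(|Q|)$.

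When a $p\in P$ is scanned, every point currently in $H$ has $x$-coordinate strictly greater than $p_x$, so computing $s_p$ is exactly the problem of finding the upper tangent from the external point $p$ (which lies to the left of the hull) to the upper hull $H$. I claim the slope function $i\mapsto (v_{i,y}-p_y)/(v_{i,x}-p_x)$ along the left-to-right sequence of hull vertices $v_1,\dots,v_k$ of $H$ is unimodal: this follows because on the upper hull the edge slopes $v_iv_{i+1}$ are monotonically decreasing, so the slope from $p$ strictly increases until we reach the tangent vertex and strictly decreases thereafter. Hence the tangent vertex can be located by the standard binary-search-on-a-convex-chain procedure in $O(\log n)$ time, comparing the slope of $pv_m$ against the hull-edge slopes incident to $v_m$ at the probe index $m$.

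Summing over all events gives a total of $O(n)$ for hull maintenance and $O(\log n)$ per query in $P$, hence $O(n\log n)$ overall, matching the stated bound. The main obstacle I would expect is the tangent binary search: one must carefully handle the vertex at which the two adjacent hull-edge slopes straddle the slope from $p$, and deal with degeneracies (coincident $x$-coordinates between a point of $P$ and a point of $Q$, or collinear triples on the hull) via the tie-breaking rule above and strict inequalities in the orientation test.
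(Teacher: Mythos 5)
Your proof takes essentially the same approach as the paper: sort $P\cup Q$, sweep from right to left maintaining the upper convex hull of the $Q$-points seen so far (with $O(1)$ amortized update cost), and answer each $P$-query by a binary-search tangent query in $O(\log n)$. The paper simply defers to Mehlhorn--Ray~\cite{MehlhornR10} and gives the same sketch; your unimodality justification for the tangent binary search and your tie-breaking rule for coincident $x$-coordinates are correct elaborations of details the paper leaves implicit.
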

\begin{proof}
  Mehlhorn and Ray~\cite{MehlhornR10} describe a simple $O(n \log n)$-time algorithm for this problem (see their Section~3.1). They also describe a faster $O(n)$-time algorithm assuming that the points are given in sorted order, but this log-factor improvement is irrelevant in our context.
  For convenience, in the following we sketch the simple $O(n \log n)$-time algorithm by Mehlhorn and Ray. 
  
  We first sort $P \cup Q$ by $x$-coordinates.
  Then we scan the points $P \cup Q$ from right to left, maintaining an array containing the upper convex hull $\cal C$ of the seen points in~$Q$. There are two types of events to process:
  
  (1) When we reach a point $p \in P$, then we want to find the point $q \in Q$ to the right of $p$ maximizing the slope of the line through $p$ and $q$. This point must lie on $\cal C$, more precisely we are looking for the tangent point on $\cal C$ of the tangent line touching $\cal C$ and passing through $p$. This tangent point can be found by standard binary search in time $O(\log n)$. 
  
  (2) When we reach a point $q \in Q$, then we update $\cal C$ by using the standard incremental algorithm for computing the convex hull of points in the plane: We iterate over the points in $\cal C$ starting from the leftmost, to enumerate all points that are removed from the upper hull after inserting $q$. Then we add $q$ to $\cal C$. 
  
  Note that every point is added exactly once to the convex hull, and thus every point is removed at most once from the convex hull. It follows that the total running time of the updates in (2) is $O(n)$.
  In total, the running time is $O(n \log n)$.
\end{proof}

\begin{lemma} \label{lem:pointsetsintermediate}
  Given sets $P,Q \subset \RR^2$ of $n$ points in the plane, in time $O(n \log n)$ we can compute the maximum of $(q_y - p_y)/|q_x - p_x|$ over all $(p,q) \in P \times Q$. 
\end{lemma}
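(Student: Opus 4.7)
The plan is to reduce the two-sided problem to two applications of Lemma~\ref{lem:mehlhorn} by splitting on the sign of $q_x - p_x$, together with a small side computation handling the degenerate case $q_x = p_x$.

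First I would observe that for any pair $(p,q)$ with $q_x > p_x$, the quantity $(q_y - p_y)/|q_x - p_x|$ equals the signed slope $(q_y - p_y)/(q_x - p_x)$, which is exactly what Lemma~\ref{lem:mehlhorn} optimizes: running that lemma on $(P,Q)$ produces, for every $p \in P$, the maximum slope $s_p$ to a point of $Q$ strictly to its right, and the maximum of the $s_p$ values is the best achievable ratio among pairs with $q_x > p_x$. This takes $O(n \log n)$ time.

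Next I would handle the symmetric case $q_x < p_x$ by reflecting across the $y$-axis. Define $P' := \{(-p_x, p_y) \mid p \in P\}$ and $Q' := \{(-q_x, q_y) \mid q \in Q\}$. Under this reflection, the condition $q_x < p_x$ becomes $-q_x > -p_x$, and the desired ratio $(q_y - p_y)/(p_x - q_x)$ becomes
\[
\frac{q_y - p_y}{(-q_x) - (-p_x)},
\]
i.e.\ the slope from $(-p_x, p_y) \in P'$ to $(-q_x, q_y) \in Q'$. Applying Lemma~\ref{lem:mehlhorn} to $(P', Q')$ therefore yields the best ratio over all pairs with $q_x < p_x$ in $O(n \log n)$ time.

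Finally I would dispose of the degenerate pairs with $q_x = p_x$ (if the problem statement admits them, otherwise they are absent): these contribute either $+\infty$ (when some $q \in Q$ and $p \in P$ share an $x$-coordinate with $q_y > p_y$) or a non-positive value, which can only improve on the previous two cases in the $+\infty$ regime. To detect this, I would sort $P \cup Q$ by $x$-coordinate and scan once, comparing, within each group of equal $x$-coordinate, the maximum $y$-coordinate among $Q$-points to the minimum $y$-coordinate among $P$-points; this scan runs in $O(n \log n)$ time. The final answer is the maximum of the three quantities. There is no real technical obstacle here — the only subtle point is the reflection step, and being careful that the strict inequality $p_x < q_x$ in Lemma~\ref{lem:mehlhorn} matches the strict inequality we get after reflecting.
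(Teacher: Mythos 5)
Your proof is correct and takes essentially the same route as the paper: split on the sign of $q_x - p_x$, handle $q_x > p_x$ directly via Lemma~\ref{lem:mehlhorn}, handle $q_x < p_x$ by reflecting across the $y$-axis and applying the same lemma, and detect equal-$x$-coordinate pairs separately by sorting. Your treatment of the degenerate case is marginally more careful than the paper's (you check $q_y > p_y$ within an $x$-group rather than just the existence of shared $x$-coordinates), but this is a cosmetic refinement rather than a substantive difference.
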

\begin{proof}
  We sort the points by $x$-coordinate to determine whether any $p \in P, q \in Q$ have equal $x$-coordinate, and thus whether the result is $\infty$. 
  Thus, assume from now on that there are no such points, and the result is finite. We split the term to be maximized into:
  \[ \alpha := \max\left\{ \frac{q_y - p_y}{q_x - p_x} \;\bigg|\; p \in P, q \in Q, p_x < q_x\right\} \quad\text{and}\quad \beta := \max\left\{ \frac{q_y - p_y}{p_x - q_x} \;\bigg|\; p \in P, q \in Q, q_x < p_x\right\}. \]
  In order to compute $\alpha$, we apply Lemma~\ref{lem:mehlhorn} on $P$ and $Q$ 
  and take the maximum over all computed values~$s_p$, $p \in P$. To compute $\beta$, we apply Lemma~\ref{lem:mehlhorn} on $P' := \{(-p_x,p_y) \mid p \in P\}$ and $Q' := \{(-q_x,q_y) \mid q \in Q\}$ and take the maximum over all computed values $s_p$, $p \in P'$.
  Finally, we return the maximum of $\alpha$ and $\beta$.
\end{proof}

\begin{lemma} \label{lem:pointsets}
  Given sets $P,Q \subset \RR^2$ of $n$ points in the plane satisfying $r_y \ge r_x$ for all $r \in P \cup Q$, in time $O(n \log n)$ we can compute the minimum of $(p_y + q_y + |p_x + q_x|)/(p_y + q_y - |p_x + q_x|)$ over all $(p,q) \in P \times Q$. 
\end{lemma}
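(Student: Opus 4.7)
The plan is to reduce this minimization to the maximization handled by Lemma~\ref{lem:pointsetsintermediate}, via monotonicity followed by a reflection trick. I would first use that the hypothesis on $P \cup Q$ ensures $p_y + q_y \ge |p_x + q_x| \ge 0$ for every $(p,q) \in P \times Q$, so the expression $(p_y + q_y + |p_x + q_x|)/(p_y + q_y - |p_x + q_x|)$ lies in $[1, +\infty]$. Writing $A := p_y + q_y$, $B := |p_x + q_x|$, and $t := B/A \in [0, 1]$, the expression equals $(1+t)/(1-t)$, which is strictly increasing in $t$. Hence minimizing it over $(p,q) \in P \times Q$ is equivalent to maximizing
\[ M := \max_{(p,q) \in P \times Q} \frac{p_y + q_y}{|p_x + q_x|}, \]
and the desired minimum is then recovered as $(M+1)/(M-1)$, or as $1$ when $M = +\infty$, i.e., when there exist $p \in P, q \in Q$ with $p_x + q_x = 0$.

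To compute $M$ in $O(n \log n)$ time, I would reflect $P$ through the origin: set $P' := \{(-p_x, -p_y) \mid p \in P\}$. Then for the point $p' \in P'$ corresponding to $p \in P$ and any $q \in Q$ we have $q_y - p'_y = p_y + q_y$ and $q_x - p'_x = p_x + q_x$, so
\[ \max_{(p',q) \in P' \times Q} \frac{q_y - p'_y}{|q_x - p'_x|} = M. \]
The left-hand side is exactly what Lemma~\ref{lem:pointsetsintermediate} computes on inputs $P'$ and $Q$ in $O(n \log n)$ time, and an additional $O(n)$ post-processing step converts the output into the desired minimum value.

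I do not expect a real obstacle here: both the monotone reformulation and the reflection are short mechanical steps, and all the geometric content (convex hulls and tangent-point binary search) is already packaged inside Lemma~\ref{lem:pointsetsintermediate}. The only detail requiring a bit of care is the degenerate case $B = 0$, which corresponds to $M = +\infty$; this is handled automatically by Lemma~\ref{lem:pointsetsintermediate}'s treatment of pairs with coincident $x$-coordinates, and simply returns the answer~$1$.
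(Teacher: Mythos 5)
Your proof is correct and follows essentially the same route as the paper's: you rewrite the objective as a monotone function of $(p_y+q_y)/|p_x+q_x|$ (you via $(1+t)/(1-t)$ with $t=|p_x+q_x|/(p_y+q_y)$, the paper via the equivalent $1+2\bigl((p_y+q_y)/|p_x+q_x|-1\bigr)^{-1}$), reduce minimizing the original quantity to maximizing $(p_y+q_y)/|p_x+q_x|$, and then negate both coordinates of $P$ to turn this into the slope-maximization problem handled by Lemma~\ref{lem:pointsetsintermediate}. The two derivations differ only cosmetically; your explicit treatment of the $M=\infty$ case is a nice touch but is also implicitly covered by Lemma~\ref{lem:pointsetsintermediate}, as you note.
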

\begin{proof}
  We rearrange
  \[ \frac{p_y + q_y + |p_x + q_x|}{p_y + q_y - |p_x + q_x|} = 1 + 2\frac{|p_x + q_x|}{p_y + q_y - |p_x + q_x|} = 1 + 2 \left( \frac{p_y + q_y}{|p_x + q_x|} - 1 \right)^{-1}.  \]
  Denote by $\mu$ the maximum of $(p_y + q_y)/|p_x + q_x|$ over all $p \in P, q \in Q$. It follows that our goal is to compute $1 + 2/(\mu - 1)$. Note that $\mu$ can be computed by applying Lemma~\ref{lem:pointsetsintermediate} to $P' := \{(-p_x,-p_y) \mid p \in P\}$ and $Q$. 
\end{proof}

Now we use this geometric tool for the \SSRL problem, to obtain a generalization of the meet in the middle approach for Equal Subset Sum. Specifically, we apply the geometric tool to the point sets $P(B)$ and $P_L(T)$ (recall Definition~\ref{def:pointsets}).

\begin{lemma} \label{lem:maingeometric}
  Given a multiset $I \subset \RRp$ of size $n$ that is partitioned into $I = T \cup B$ with $\max(I) \in T$, write $S := |P(B)| + |P_L(T)|$. Then in time $\Os(S)$ we can compute $\OPTL(I)$, as well as an optimal solution realizing $\OPTL(I)$ (i.e., disjoint subsets $X,Y \subseteq I$ with $\max(I) \in X \cup Y$ and $R(X,Y) = \OPTL(I)$). 
\end{lemma}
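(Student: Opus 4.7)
The plan is to bijectively encode each feasible \SSRL solution as a pair of points in $P(B) \times P_L(T)$, express the ratio $R(X,Y)$ as a closed-form function of the two points' coordinates, and then invoke Lemma~\ref{lem:pointsets} to minimize it in near-linear time.

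Given disjoint $X, Y \subseteq I$ with $\max(I) \in X \cup Y$, split $X = X_B \cup X_T$ and $Y = Y_B \cup Y_T$ according to $B$ and $T$. Since $\max(I) = \max(T) \in X \cup Y$, we have $\max(T) \in X_T \cup Y_T$. Associate to this solution the points
\[ p := (\Sum(X_B) - \Sum(Y_B),\, \Sum(X_B) + \Sum(Y_B)), \quad q := (\Sum(X_T) - \Sum(Y_T),\, \Sum(X_T) + \Sum(Y_T)). \]
Then $\Sum(X) = \tfrac{1}{2}(p_y + q_y + p_x + q_x)$ and $\Sum(Y) = \tfrac{1}{2}(p_y + q_y - p_x - q_x)$, and a direct calculation gives
\[ R(X,Y) = \frac{\max\{\Sum(X), \Sum(Y)\}}{\min\{\Sum(X), \Sum(Y)\}} = \frac{p_y + q_y + |p_x + q_x|}{p_y + q_y - |p_x + q_x|}. \]
The key observation is that for fixed $p_x$ and $q_x$, this ratio is \emph{decreasing} in $p_y + q_y$. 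Hence passing from the full (pre-$\opU$) generator sets to $P(B)$ and $P_L(T)$ preserves the optimum: any point dominated by another with the same $x$-coordinate can only yield a worse ratio. Conversely, every $(p,q) \in P(B) \times P_L(T)$ is realized by some feasible pair $(X,Y)$ containing $\max(I)$. Thus $\OPTL(I)$ equals the minimum of the above ratio over $P(B) \times P_L(T)$.

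To apply Lemma~\ref{lem:pointsets} we need $r_y \ge r_x$ for all $r \in P(B) \cup P_L(T)$. Since every such point arises from disjoint sets $X',Y'$ we have $r_y - r_x = 2\Sum(Y') \ge 0$ and $r_y + r_x = 2\Sum(X') \ge 0$; in particular $r_y \ge |r_x|$, so the hypothesis holds and the denominator $p_y + q_y - |p_x + q_x| = 2 \min\{\Sum(X), \Sum(Y)\}$ is positive whenever both sides are nonempty (otherwise the ratio is $\infty$, handled trivially). The algorithm is then: use Lemma~\ref{lem:sumgen} to build $P(B)$ and $P_L(T)$ in sorted order in time $\Os(S)$; apply Lemma~\ref{lem:pointsets} to obtain the minimizing pair $(p^\star, q^\star)$ in time $\Os(S)$; finally, reconstruct the actual subsets by maintaining, during the incremental construction of Lemma~\ref{lem:sumgen}, a back-pointer at each point recording which item was added and whether it went into $X$ or $Y$, so that tracing pointers from $p^\star$ and $q^\star$ yields $X_B, Y_B, X_T, Y_T$.

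The main obstacle I anticipate is handling the $\opU$ operation carefully: I must check that restricting to non-dominated points does not discard an optimal solution (this uses monotonicity in $p_y + q_y$), and that the back-pointer bookkeeping survives the merge-and-prune step in Lemma~\ref{lem:sumgen}, which is straightforward since for each retained point we can store any one predecessor triple that realizes it. The remaining details (degenerate cases such as $B = \emptyset$, yielding $P(B) = \{(0,0)\}$, or the forced inclusion of $\max(I)$ in $P_L$) fall out of the construction in Lemma~\ref{lem:sumgen}.
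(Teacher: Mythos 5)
Your proposal is correct and follows essentially the same approach as the paper: build $P(B)$ and $P_L(T)$ via Lemma~\ref{lem:sumgen}, rewrite $R(X,Y)$ as $(p_y+q_y+|p_x+q_x|)/(p_y+q_y-|p_x+q_x|)$, and minimize via Lemma~\ref{lem:pointsets}, with back-pointers for reconstruction. You are in fact a bit more explicit than the paper in justifying that the $\opU$ pruning is harmless (via the monotonicity of the ratio in $p_y+q_y$) and in verifying the hypothesis $r_y\ge r_x$ of Lemma~\ref{lem:pointsets}, both of which the paper treats tersely.
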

\begin{proof}
  The algorithm works as follows.
  \begin{enumerate}
  \item Generate the point sets $P(B)$ and $P_L(T)$ using Lemma~\ref{lem:sumgen}.
  \item Run Lemma~\ref{lem:pointsets} on $P := P(B)$ and $Q := P_L(T)$ to compute the value 
  $$\mu = \min\left\{ \frac{p_y + q_y + |p_x + q_x|}{p_y + q_y - |p_x + q_x|} \;\bigg|\; p \in P, q \in Q\right\}.$$ 
  \item Return the number $\mu$.
  \end{enumerate}
  To show correctness of this algorithm, we need to show that $\mu = \OPTL(I)$. To this end, consider disjoint $X,Y \subseteq I$ with $\max(I) \in X \cup Y$. Consider the corresponding points 
  \begin{align*}
    p &= \big(\Sum(X \cap B) - \Sum(Y \cap B),\, \Sum(X \cap B) + \Sum(Y \cap B)\big), \\
    q &= \big(\Sum(X \cap T) - \Sum(Y \cap T),\, \Sum(X \cap T) + \Sum(Y \cap T)\big).
  \end{align*}
  Observe that $p \in P$ and $q \in Q$. Thus, we described a mapping $\Phi$ from pairs $(X,Y)$, where $X,Y$ are disjoint subsets of $I$ with $\max(I) \in X \cup Y$, to pairs $(p,q) \in P \times Q$. Observe that the mapping~$\Phi$ is surjective.
  
  We rewrite the definition of $R(X,Y)$ to obtain
  \[ R(X,Y) = \frac{\max\{\Sum(X),\Sum(Y)\}}{\min\{\Sum(X),\Sum(Y)\}} = \frac{\Sum(X) + \Sum(Y) + |\Sum(X) - \Sum(Y)|}{\Sum(X) + \Sum(Y) - |\Sum(X) - \Sum(Y)|} = \frac{p_y + q_y + |p_x + q_x|}{p_y + q_y - |p_x + q_x|}. \]
  This shows that the surjective mapping $\Phi$ maintains the value $R(X,Y)$. It follows that by minimizing $\frac{p_y + q_y + |p_x + q_x|}{p_y + q_y - |p_x + q_x|}$ over all $(p,q) \in P \times Q$ we compute the minimum value $R(X,Y)$ over all disjoint $X,Y \subseteq I$ with $\max(I) \in X \cup Y$. As the latter is the definition of $\OPTL(I)$, we obtain $\mu = \OPTL(I)$. This proves correctness of the algorithm.
  
  \smallskip
  Regarding running time, note that step 1 runs in time $\Os(S)$ and step 2 runs in time $O(S \log S)$. By the simple bounds $|P_L(J)| \le |P(J)| \le 3^{|J|}$ for any set $J$, we observe that $\log S \le O(\log(3^n)) = O(n)$ and thus step 2 runs in time $O(nS) = \Os(S)$ as well.
  
  \smallskip
  Note that the above algorithm only computes the value $\OPTL(I)$, but not an optimal solution realizing this value. It is easy to augment the geometric tool (Lemma~\ref{lem:pointsets}) to not only compute the value $\mu$, but also the points $(p,q)$ realizing this value, and to augment the point set generation (Lemma~\ref{lem:sumgen}) to label each point by the last added item and a pointer to the previous point, so that the sets generating the points $p$ and $q$ can be reconstructed in time $O(n)$. We omit the details.
\end{proof}

\subsection{Preparations on Rounding}
\label{sec:preprounding}

Here we give two lemmas that bound the difference of the ratio $R(X,Y)$ of two sets $X,Y$ and the ratio $R(\tX,\tY)$ of sets $\tX,\tY$ that are the outcome of rounding $X$ and $Y$. We later use these lemmas to argue correctness of algorithms that round items.

\begin{lemma}[Rounding I] \label{lem:roundingone}
  Let $I \subset \RRp$ of size $n$ and let $\delta > 0$. Round up each item in~$I$ to a multiple of $\delta$ and call the resulting (multi-)set $\widetilde I$. 
  Consider disjoint $\tX,\tY \subseteq \tI$ with $R(\tX,\tY) = 1$.
  Further assume that $\delta \le \frac \eps {2n} \max(\tX \cup \tY)$. Let $X,Y \subseteq I$ be the sets corresponding to $\tX,\tY$ before rounding. Then we have $R(X,Y) \le 1+\eps$.
\end{lemma}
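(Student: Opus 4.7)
The plan is to show that rounding shifts each sum by at most $n\delta$, and that this shift is negligible compared to the common rounded sum $\Sum(\tX) = \Sum(\tY)$, because the common sum dominates $\max(\tX \cup \tY)$, which in turn controls $n\delta$.

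First, I would record the elementary rounding bounds. Since each element of $I$ was rounded up to a multiple of $\delta$ (so by an amount in $[0,\delta)$) and $|\tX|,|\tY| \le |\tI| = n$, the sets $X,Y$ corresponding to $\tX,\tY$ satisfy
\[ \Sum(\tX) - n\delta \;\le\; \Sum(X) \;\le\; \Sum(\tX), \qquad \Sum(\tY) - n\delta \;\le\; \Sum(Y) \;\le\; \Sum(\tY). \]

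Second, I would exploit the hypothesis $R(\tX,\tY) = 1$, i.e.\ $\Sum(\tX) = \Sum(\tY)$. The maximum element $M := \max(\tX \cup \tY)$ lies in one of $\tX$ or $\tY$, so trivially $M \le \Sum(\tX) = \Sum(\tY)$ (all items are positive). Combined with the assumption $\delta \le \tfrac{\eps}{2n} M$, this gives
\[ n\delta \;\le\; \tfrac{\eps}{2}\, M \;\le\; \tfrac{\eps}{2}\, \Sum(\tX) \;=\; \tfrac{\eps}{2}\, \Sum(\tY). \]

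Third, I would bound the ratio $R(X,Y)$. By symmetry assume $\Sum(X) \ge \Sum(Y)$, so $R(X,Y) = \Sum(X)/\Sum(Y)$. Using the rounding bounds and $\Sum(\tX) = \Sum(\tY)$,
\[ R(X,Y) \;\le\; \frac{\Sum(\tX)}{\Sum(\tY) - n\delta} \;\le\; \frac{\Sum(\tY)}{\Sum(\tY) - \tfrac{\eps}{2}\Sum(\tY)} \;=\; \frac{1}{1 - \eps/2} \;\le\; 1 + \eps, \]
where the last inequality uses $\eps \in (0,1)$, matching the same $1/(1-\eps/2) \le 1+\eps$ bound that appears in the proof of Lemma~\ref{lem:reduction}.

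I do not anticipate any real obstacle here; the only slightly delicate point is justifying $M \le \Sum(\tX)$ (equivalently $\le \Sum(\tY)$), which needs both the positivity of items and the equality $\Sum(\tX) = \Sum(\tY)$. The rest is bookkeeping.
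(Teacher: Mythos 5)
Your proof is correct and follows essentially the same route as the paper's: the same rounding bounds $\Sum(\tX) - n\delta \le \Sum(X) \le \Sum(\tX)$, the same key observation that $\max(\tX\cup\tY) \le \Sum(\tX) = \Sum(\tY)$ (so $n\delta \le \tfrac{\eps}{2}\Sum(\tY)$), and the same final bound $1/(1-\eps/2) \le 1+\eps$. The only cosmetic difference is that you invoke symmetry to reduce to $\Sum(X)\ge\Sum(Y)$, whereas the paper bounds both terms of the max and observes they coincide.
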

\begin{proof}
  Note that we have $\Sum(X) \le \Sum(\tX) \le \Sum(X) + \delta n$ and $\Sum(Y) \le \Sum(\tY) \le \Sum(Y) + \delta n$. This yields
  \begin{align*}
    R(X,Y) &= \max\left\{ \frac{\Sum(X)}{\Sum(Y)}, \frac{\Sum(Y)}{\Sum(X)} \right\} 
    \le \max\left\{ \frac{\Sum(\tX)}{\Sum(\tY) - \delta n}, \frac{\Sum(\tY)}{\Sum(\tX) - \delta n} \right\}. 
  \end{align*}
  Since $\delta \le \frac \eps {2n} \max(\tX \cup \tY)$ and we have $\max(\tX \cup \tY) \le \Sum(\tX) = \Sum(\tY)$, we obtain
  \begin{align*}
    R(X,Y) 
    &\le \max\left\{ \frac{\Sum(\tX)}{(1-\eps/2) \Sum(\tY)}, \frac{\Sum(\tY)}{(1-\eps/2) \Sum(\tX)} \right\} = \frac 1{1-\eps/2} \le 1+\eps,
  \end{align*}
  where we used $1/(1-\eps/2) \le 1+\eps$ for $\eps \in (0,1)$.
\end{proof}

\begin{lemma}[Rounding II] \label{lem:roundingtwo}
  Let $I \subset \RRp$ of size $n$ with $\OPTL(I) \le 2$ and let $0 < \delta \le \frac \eps{9n} \max(I)$. Round up each item in~$I$ to a multiple of $\delta$ and call the resulting set $\widetilde I$. 
  Consider disjoint $\tX,\tY \subseteq \tI$ with $\max(\tI) \in \tX \cup \tY$ and $R(\tX,\tY) = \OPTL(\tI)$. Let $X,Y \subseteq I$ be the sets corresponding to $\tX,\tY$ before rounding. Then we have
  \[ R(X,Y) \le (1+\eps)\cdot \OPTL(I). \]
\end{lemma}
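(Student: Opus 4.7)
The proof will go through $\OPTL(\tI)$ as an intermediate quantity: first I show that rounding does not blow up the optimum much, i.e.\ $\OPTL(\tI) \le \OPTL(I) + O(\eps)$, and then I show that un-rounding the computed near-optimal solution $(\tX,\tY)$ yields sets whose ratio $R(X,Y)$ is close to $R(\tX,\tY)$. Both steps exploit that the presence of the maximum item forces both sums to be at least a constant fraction of $\max(I)$, so the additive rounding error $\delta n \le \tfrac{\eps}{9}\max(I)$ converts to a multiplicative error of order $\eps$.

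\emph{Step 1: $\OPTL(\tI) \le \OPTL(I) + \tfrac{2\eps}{9}$.} I take an optimal pair $(X^*,Y^*)$ witnessing $\OPTL(I)$ with $\max(I) \in X^* \cup Y^*$, WLOG $\Sum(X^*) \ge \Sum(Y^*)$. Monotonicity of rounding gives $\max(\tI) \in \tX^* \cup \tY^*$, so $(\tX^*,\tY^*)$ is feasible for $\OPTL(\tI)$. Since $\max(I) \in X^* \cup Y^*$ I have $\Sum(X^*) \ge \max(I)$, and using $\OPTL(I) \le 2$ I get $\Sum(Y^*) \ge \max(I)/2$. Each rounded sum differs from the original by at most $\delta n \le \tfrac{\eps}{9}\max(I) \le \tfrac{2\eps}{9}\Sum(Y^*)$. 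Plugging these bounds into the definition of $R(\tX^*,\tY^*)$ (considering both cases of which rounded sum is larger) gives $R(\tX^*,\tY^*) \le \OPTL(I) + \tfrac{2\eps}{9}$, using $\OPTL(I) \ge 1$ in the degenerate case.

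\emph{Step 2: Transfer back.} From Step 1 and the hypothesis $R(\tX,\tY) = \OPTL(\tI)$ I have $R(\tX,\tY) \le \OPTL(I) + \tfrac{2\eps}{9} \le 2 + \tfrac{2\eps}{9} \le \tfrac{20}{9}$. WLOG $\Sum(\tX) \ge \Sum(\tY)$. Because $\max(\tI) \in \tX \cup \tY$, the larger sum $\Sum(\tX)$ is at least $\max(I)$, and the bounded ratio then yields $\Sum(\tY) \ge \tfrac{9}{20}\max(I)$. The un-rounded sums satisfy $\Sum(\tX) - \delta n \le \Sum(X) \le \Sum(\tX)$ and similarly for $Y$, so in the case $\Sum(X) \ge \Sum(Y)$,
\[ R(X,Y) \le \frac{\Sum(\tX)}{\Sum(\tY) - \delta n} \le \frac{R(\tX,\tY)}{1 - \tfrac{20\eps}{81}}. \]
In the remaining case $\Sum(Y) > \Sum(X)$, the same argument bounds $R(X,Y)$ by $1/(1 - \eps/9)$, which is at most $(1+\eps)\OPTL(I)$ trivially since $\OPTL(I) \ge 1$.

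\emph{Final check.} It remains to verify that $(\OPTL(I) + \tfrac{2\eps}{9})/(1 - \tfrac{20\eps}{81}) \le (1+\eps)\OPTL(I)$. Cross-multiplying and using $\OPTL(I) \ge 1$ and $\eps < 1$, this reduces to a numerical inequality of the form $\OPTL(I)(61 - 20\eps) \ge 18$, which clearly holds. The main obstacle is not any individual step but the constant bookkeeping: the factor $9$ in the hypothesis $\delta \le \tfrac{\eps}{9n}\max(I)$ is tightly tuned so that the two compounding sources of error (rounding the optimum up in Step 1, and rounding the near-optimum down in Step 2) combine to exactly $(1+\eps)\OPTL(I)$ rather than a larger factor; I will present the constants carefully to make this tight.
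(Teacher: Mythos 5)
Your proof is correct and follows essentially the same approach as the paper: show $\OPTL(\tI)$ is close to $\OPTL(I)$ using the maximum-item constraint and $\OPTL(I)\le 2$, deduce that both rounded sums are $\Omega(\max(I))$, then un-round. The only differences are cosmetic bookkeeping: the paper uses a multiplicative intermediate bound $\OPTL(\tI)\le(1+\eps/9)\OPTL(I)$ instead of your additive $\OPTL(I)+\tfrac{2\eps}{9}$, and it keeps the $\max$ over both ratio orientations rather than splitting into cases, but the structure and the role of the factor $9$ are identical.
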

\begin{proof}
  We start by proving $\OPTL(\tI) \le (1+\tfrac \eps 9) \OPTL(I)$. To this end, consider an optimal solution for $\OPTL(I)$, i.e., disjoint $X',Y' \subseteq I$ with $\max(I) \in X' \cup Y'$ and $R(X',Y') = \OPTL(I)$. 
  By symmetry we can assume $\Sum(X') \ge \Sum(Y')$. Since $\max(I) \in X' \cup Y'$ we have $\Sum(X') \ge \max(I)$, which implies $\delta n \le \frac \eps 9 \Sum(X')$.
  Let $\tX',\tY' \subseteq \tI$ be the sets corresponding to $X',Y'$ after rounding. 
  Note that $\Sum(X') \le \Sum(\tX') \le \Sum(X') + \delta n$ and $\Sum(Y') \le \Sum(\tY') \le \Sum(Y') + \delta n$. 
  This yields
  \begin{align*}
    \OPTL(\tI) \le R(\tX',\tY') = \max\left\{ \frac{\Sum(\widetilde X')}{\Sum(\widetilde Y')}, \frac{\Sum(\widetilde Y')}{\Sum(\widetilde X')} \right\} &\le \max\left\{ \frac{\Sum(X') + \delta n}{\Sum(Y')}, \frac{\Sum(Y') + \delta n}{\Sum(X')} \right\}  \\
    &\le \frac {(1+\frac \eps 9) \Sum(X')}{\Sum(Y')}  \\
    &= (1+\tfrac \eps 9) R(X',Y') = (1+ \tfrac \eps 9) \OPTL(I).
  \end{align*}
  
  Now consider the sets $\tX,\tY,X,Y$ from the lemma statement. 
  By symmetry we can assume $\Sum(X) \ge \Sum(Y)$. Then we have
  \begin{align*}
    \max(I) &= \max(X \cup Y)  & \text{since $\max(\tI) \in \tX \cup \tY$}  \\
    &\le \Sum(X)  & \text{since $\Sum(X) \ge \Sum(Y)$}  \\
    &\le \Sum(\tX)  & \text{by rounding up}  \\
    &\le \OPTL(\tI) \cdot \Sum(\tY)  & \text{as $\tX,\tY$ is an optimal solution for $\OPTL(\tI)$}  \\
    &\le (1 + \tfrac \eps 9) \OPTL(I) \cdot \Sum(\tY)  & \text{as shown in the first paragraph of this proof}  \\
    &\le (1 + \tfrac \eps 9) \cdot 2 \cdot \Sum(\tY)  & \text{by assumption $\OPTL(I) \le 2$}  \\
    &\le 4 \cdot \Sum(\tY).
  \end{align*}
  Together with $\delta \le \frac \eps {9n} \max(I)$ we obtain $\delta n \le \tfrac {4\eps}9 \Sum(\tY)$ and $\delta n \le \tfrac \eps 9 \Sum(\tX)$. 
  Note that we have $\Sum(X) \le \Sum(\tX) \le \Sum(X) + \delta n$ and $\Sum(Y) \le \Sum(\tY) \le \Sum(Y) + \delta n$. 
  We thus obtain
  \begin{align*}
    R(X,Y) = \max\left\{ \frac{\Sum(X)}{\Sum(Y)}, \frac{\Sum(Y)}{\Sum(X)} \right\} 
    &\le \max\left\{ \frac{\Sum(\tX)}{\Sum(\tY) - \delta n}, \frac{\Sum(\tY)}{\Sum(\tX) - \delta n} \right\} \\
    &\le \max\left\{ \frac{\Sum(\tX)}{(1-\tfrac {4\eps}9)\Sum(\tY)}, \frac{\Sum(\tY)}{(1-\tfrac \eps 9)\Sum(\tX)} \right\} \\
    &\le \frac {R(\tX,\tY)}{1-\tfrac {4\eps}9} = \frac{\OPTL(\tI)}{1-\tfrac {4\eps}9} 
    \le \frac {1+\tfrac \eps 9}{1-\tfrac {4\eps}9} \OPTL(I) 
    \le (1+\eps) \OPTL(I),
  \end{align*}
  where in the last step we used $(1+\tfrac \eps 9)/(1-\tfrac {4\eps}9) \le 1+\eps$ for any $\eps \in (0,1)$.
\end{proof}

%
%

\subsection{Putting the Pieces Together}
\label{sec:combination}

Now we are ready to prove a lemma which is almost the main result of Section~\ref{sec:improved}, except that it has the additional requirement $\OPTL(I) \le 2$. We will remove this requirement in Section~\ref{sec:removerequirement}.

\begin{lemma} \label{lem:improvedalgo}
  We can solve an \SSRL instance $(I,\eps)$ in time $\Os(1/\eps^{0.93856})$, assuming $\OPTL(I) \le 2$.
\end{lemma}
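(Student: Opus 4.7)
The plan is to combine the refined pigeonhole algorithm of Lemma~\ref{lem:mainpigeon} with the geometric meet-in-the-middle algorithm of Lemma~\ref{lem:maingeometric}, dispatching between them based on the number of subset sums of the ``bottom'' part, and using two different rounding granularities tailored to the rounding requirements of each subroutine. I partition $I$ into the top $\tau$ items $T$ and the remaining items $B := I \setminus T$, where $\tau$ is a parameter to be optimized; let $\gamma := \min(T)/\max(T) \in (0,1]$. I define
\[
\delta_1 := \tfrac{\eps}{9n}\max(I), \qquad \delta_2 := \tfrac{\eps}{9n}\min(T) = \gamma\delta_1,
\]
and let $\tI_1, \tI_2$ be the multisets obtained by rounding each item of $I$ up to a multiple of $\delta_1, \delta_2$ respectively, with $B_j, T_j$ the corresponding images of $B, T$. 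Here $\delta_1$ is the coarsest rounding compatible with Lemma~\ref{lem:roundingtwo} (which presumes a solution containing $\max(I)$), while $\delta_2$ is the coarsest rounding compatible with Lemma~\ref{lem:roundingone} when the solution is only guaranteed to touch $T$.

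Using Lemma~\ref{lem:computecS} I first test in time $O^*(\Theta)$ whether $|\cS(B_2)| > \Theta$, where $\Theta := \lceil n\max(I)/(\delta_2 2^\tau) + 2 \rceil = O^*(1/(\eps\gamma 2^\tau))$. In the positive case, I invoke Lemma~\ref{lem:mainpigeon} on $\tI_2$ (with $\beta = \delta_2$) to obtain disjoint $\tX, \tY \subseteq \tI_2$ with $R(\tX, \tY) = 1$ and $(\tX \cup \tY) \cap T_2 \ne \emptyset$. The intersection condition forces $\max(\tX \cup \tY) \ge \min(T_2) \ge \min(T)$, which gives $\delta_2 = \tfrac{\eps}{9n}\min(T) \le \tfrac{\eps}{2n}\max(\tX \cup \tY)$, so Lemma~\ref{lem:roundingone} applies and unrounding yields $X, Y \subseteq I$ with $R(X, Y) \le 1+\eps \le (1+\eps)\,\OPTL(I)$. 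This branch runs in time $O^*(\max(I)^2/(\delta_2^2 4^\tau) + 3^\tau) = O^*(1/(\eps^2\gamma^2 4^\tau) + 3^\tau)$. In the negative case, I invoke Lemma~\ref{lem:maingeometric} on $\tI_1$ with partition $T_1 \cup B_1$ to compute an optimal solution for $\OPTL(\tI_1)$; Lemma~\ref{lem:roundingtwo} (using $\OPTL(I) \le 2$ and $\delta_1 = \eps\max(I)/(9n)$) then yields $R(X, Y) \le (1+\eps)\,\OPTL(I)$ after unrounding. The runtime is $O^*(|P_L(T_1)| + |P(B_1)|) \le O^*(3^\tau + |P(B_1)|)$. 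To bound $|P(B_1)|$ I observe that its points have pairwise distinct $x$-coordinates (by the $\opU$ non-domination condition), each $x$-coordinate is a multiple of $\delta_1$ lying in $[-\Sum(B_1), \Sum(B_1)]$, and $\Sum(B_1) \le n\max(B) + n\delta_1 \le n\min(T) + n\delta_1$; hence $|P(B_1)| = O(n\min(T)/\delta_1) = O^*(\gamma/\eps)$.

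Summing the two branches, the total runtime is $O^*(3^\tau + 1/(\eps^2\gamma^2 4^\tau) + \gamma/\eps)$. Choosing $\tau := \lceil 2\log_{12}(1/(\eps\gamma))\rceil$ balances $3^\tau$ with $1/(\eps^2\gamma^2 4^\tau)$ at the common value $(1/(\eps\gamma))^{c}$, where $c := 2\log_2 3/\log_2 12 \approx 0.8843$. Writing $\gamma = \eps^\beta$, the total becomes $O^*(\max(1/\eps^{c(1+\beta)}, 1/\eps^{1-\beta}))$; the worst case occurs where $c(1+\beta) = 1-\beta$, giving $\beta = (1-c)/(1+c)$ and final exponent $2c/(1+c) = 4\log_2 3/(2\log_2 2 + 3\log_2 3) = \log_{108} 81 \approx 0.93856$. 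The main technical hurdle will be the delicate interaction between the two rounding granularities: one must carefully verify that Lemma~\ref{lem:roundingone} applies in the pigeonhole branch (using only the weak guarantee $\max(\tX \cup \tY) \ge \min(T)$ rather than $\max(I)$) and that the $O^*(\gamma/\eps)$ bound on $|P(B_1)|$ combines seamlessly with the pigeonhole threshold to achieve the claimed sublinear dependence on $1/\eps$. Trivial corner cases such as $\gamma \le \eps$ (where $\gamma/\eps \le 1$) and $\tau \ge n$ (where brute force is affordable, since $n = \polylog(1/\eps)$ after Lemma~\ref{lem:reduction}) are handled separately.
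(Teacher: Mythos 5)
Your overall architecture (partition $I$ into top $T$ and bottom $B$; two rounding granularities; a pigeonhole branch using Lemmas~\ref{lem:mainpigeon} and~\ref{lem:roundingone}; a geometric branch using Lemmas~\ref{lem:maingeometric} and~\ref{lem:roundingtwo}) matches the paper, and your exponent arithmetic is consistent with the paper's final answer $\log_{108} 81$. However, there is a genuine gap in the running-time analysis of the non-pigeonhole branch, and it propagates to the final optimization.

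In your negative branch you bound $|P(B_1)| = O^*(\gamma/\eps)$ purely from the fact that the $x$-coordinates are multiples of $\delta_1$ in a bounded range. This is the ``naive'' bound. For $\gamma = \Theta(1)$ this is $\Theta(1/\eps)$ \emph{regardless of $\tau$}, and the negative branch is taken whenever the pigeonhole test fails, a condition you do not control. So for a worst-case instance with $\gamma = \Theta(1)$ and the test failing, your algorithm runs in time $\Omega(1/\eps)$, not $O^*(1/\eps^{0.9386})$. The symptom of this is that your final optimization ``the worst case occurs where $c(1+\beta) = 1-\beta$'' is wrong: you are taking a \emph{max} over the two branch times (the adversary picks the branch), and that max is not minimized at the crossing point; at $\beta = 0$ (i.e.\ $\gamma = \Theta(1)$) the exponent $1-\beta$ is $1$, which is linear.

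What you are missing is the paper's Claim~\ref{cla:roundsubsetsums} and the way it is used: when the pigeonhole test fails, you learn $|\cS(B_2)| = O^*(1/(\eps\gamma 2^\tau))$; the paper shows that this bound transfers (up to $\poly(n)$ factors) from the $\delta_2$-rounded set to the $\delta_1$-rounded set, i.e.\ $|\cS(B_1)| = O^*(|\cS(B_2)|)$. Combined with $|P(B_1)| \le |\cS(B_1)|^2$ this yields the second, instance-dependent bound $|P(B_1)| = O^*(1/(\eps^2\gamma^2 4^\tau))$, which \emph{is} controllable by $\tau$. Moreover, the correct dispatch is not on the pigeonhole test outcome but on $\gamma$ itself (an observable quantity): run the naive geometric algorithm when $\gamma$ is small and the pigeonhole-plus-derived-bound algorithm when $\gamma$ is large, so that the final bound is $O^*(\min\{\gamma/\eps,\, 1/(\eps^2\gamma^2 4^\tau)\} + 3^\tau)$ --- a \emph{min} you control, not a max the adversary controls. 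Only then does the balancing over $\gamma$ and $\tau$ give the claimed exponent. Your corner-case remark about $\tau \ge n$ also does not rescue this, since $3^n = 3^{\polylog(1/\eps)}$ is superpolynomial in $1/\eps$.
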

\begin{proof}
To achieve this result we combine two algorithms.

\paragraph{Algorithm 1.} This algorithm works as follows:

\begin{enumerate}
\item For a parameter $\tau \in \mathbb{N}$ to be chosen later, partition $I$ into its largest $\tau$ numbers $T$ (top) and the remaining numbers $B = I \setminus T$ (bottom). 

\item Round up all numbers in $I = B \cup T$ to multiples of $\alpha := \frac \eps {9n} \max(I)$, obtaining $\ttI = \ttB \cup \ttT$. Apply Lemma~\ref{lem:maingeometric} to compute disjoint $\ttX,\ttY \subseteq \ttI$ with $\max(\ttI) \in \ttX \cup \ttY$ realizing $\OPTL(\ttI)$. By Lemma~\ref{lem:roundingtwo}, the sets $X,Y \subseteq I$ corresponding to $\ttX,\ttY$ before rounding have ratio $R(X,Y) \le (1+\eps) \OPTL(I)$, so we are done.
\end{enumerate}

For analyzing the running time we introduce the parameter $\psi := \max(T)/\min(T)$. 
Note that Step 3 is dominated by the call to Lemma~\ref{lem:maingeometric}, which runs in time $\Os(S)$ where $S = |P(\ttB)| + |P_L(\ttT)|$ (recall Definition~\ref{def:pointsets} for $P(.)$ and $P_L(.)$). We bound $|P_L(\ttT)| \le 3^\tau$ (since every element of $T$ can be in $X$ or in~$Y$ or in neither of them). 
For $|P(\ttB)|$, we note that the $x$-coordinates of points in $P(\ttB)$ are multiples of $\alpha = \frac \eps {9n} \max(I) = \frac \eps {9n} \max(T)$ of absolute value at most $n \cdot \max(\ttB) \le n \cdot \min(\ttT) \le n \cdot (\min(T) + \alpha)$. Hence, the number of different $x$-coordinates is at most $2 n \cdot (\min(T) + \alpha) / \alpha + 1 = \Os(1/(\eps \psi) + 1)$. Since each point in $P(\ttB)$ has a unique $x$-coordinate, we obtain $|P(\ttB)| = \Os(1/(\eps \psi) + 1)$. The total running time thus is $\Os(1/(\eps \psi) + 3^\tau)$. This finishes the analysis of Algorithm 1.

\paragraph{Algorithm 2.} We now extend the previous algorithm to also make use of Lemma~\ref{lem:mainpigeon}. Algorithm~2 works as follows:

\begin{enumerate}
\item Partition $I$ into its largest $\tau$ numbers $T$ (top) and the remaining numbers $B = I \setminus T$ (bottom). 

\item Round up all numbers in $I = B \cup T$ to multiples of $\beta := \frac \eps {2n} \min(T)$, obtaining $\tI = \tB \cup \tT$. 
Check whether $|\cS(\tB)| > n \max(\tI)/(\beta 2^\tau) + 1$ in time $\Os(\max(\tI)/(\beta 2^\tau) + 1)$ by running the algorithm from Lemma~\ref{lem:computecS}. If this check succeeds, then Lemma~\ref{lem:mainpigeon} is applicable. In this case, we run the algorithm from Lemma~\ref{lem:mainpigeon} on $\tI = \tB \cup \tT$, which is guaranteed to compute disjoint $\tX,\tY \subseteq \tI$ with $(\tX \cup \tY) \cap \tT \ne \emptyset$ and $R(\tX,\tY) = 1$. Now Lemma~\ref{lem:roundingone} is applicable to $\tX,\tY$, since $\beta = \frac \eps {2n} \min(T) \le \frac \eps {2n} \min(\tT) \le \frac \eps {2n} \max(\tX \cup \tY)$. By Lemma~\ref{lem:roundingone}, the sets $X,Y \subseteq I$ corresponding to $\tX,\tY$ before rounding have ratio $R(X,Y) \le 1+\eps$. If we label each number in $\tI$ by its original number in $I$ we can reconstruct $X,Y$ from $\tX,\tY$ in time $\Os(1)$, so we are done.

Hence, from now on we can assume $|\cS(\tB)| \le n \max(\tI)/(\beta 2^\tau) + 1$, so that Lemma~\ref{lem:mainpigeon} is not applicable.

\item Round up all numbers in $I = B \cup T$ to multiples of $\alpha := \frac \eps {9n} \max(T)$, obtaining $\ttI = \ttB \cup \ttT$. Apply Lemma~\ref{lem:maingeometric} to compute disjoint $\ttX,\ttY \subseteq \ttI$ with $\max(\ttI) \in \ttX \cup \ttY$ realizing $\OPTL(\ttI)$. By Lemma~\ref{lem:roundingtwo}, the sets $X,Y \subseteq I$ corresponding to $\ttX,\ttY$ before rounding have ratio $R(X,Y) \le (1+\eps) \OPTL(I)$, so we are done.
\end{enumerate}

To analyze the running time of Algorithm 2, note that
\begin{align}
  \max(\tI)/\beta &\le 1 + \max(I)/\beta  & \text{since we rounded up to multiples of $\beta$}  \notag\\
  & = 1 + \max(T) \cdot n /(\eps \min(T))  & \text{since $T$ contains the largest item of $I$} \notag\\
  & = \Os(\psi/\eps)  & \text{by definition of $\psi$} \label{eq:woashja}
\end{align}
Hence, running Lemma~\ref{lem:mainpigeon} in step 2 takes time $\Os(\max(\tI)^2/(\beta^2 4^\tau) + 3^\tau) = \Os(\psi^2/(\eps^2 4^\tau) + 3^\tau)$. The remaining running time of step 2 is dominated by this term.

As in Algorithm 1, Step 3 is dominated by the call to Lemma~\ref{lem:maingeometric}, which runs in time $\Os(S)$ where $S = |P(\ttB)| + |P_L(\ttT)|$. We again bound $|P_L(\ttT)| \le 3^\tau$. For $|P(\ttB)|$ we now use a different bound.
We observe that each $x$-coordinate of a point in $P(\ttB)$ can be written as the difference of two numbers in $\cS(\ttB)$. We can thus bound $|P(\ttB)| \le |\cS(\ttB)|^2$. As we will show in Claim~\ref{cla:roundsubsetsums} below, we have $|\cS(\ttB)| = \Os(|\cS(\tB)|)$. Recall that after step 2 we can assume $|\cS(\tB)| = \Os(\max(\tI)/(\beta 2^\tau) + 1)$, which we can simplify to $|\cS(\tB)| = \Os(\psi/(\eps 2^\tau) + 1)$ by using inequality~(\ref{eq:woashja}). Combining these bounds yields $|P(\ttB)| = \Os(\psi^2/(\eps^2 4^\tau) + 1)$. 
The total running time of Algorithm 2 is thus $\Os(\psi^2/(\eps^2 4^\tau) + 3^\tau)$. 

\begin{claim} \label{cla:roundsubsetsums}
  We have $|\cS(\ttB)| = \Os(|\cS(\tB)|)$.
\end{claim}
\begin{proof}
  Let $X \subseteq B$ and consider the corresponding subsets $\tX \subseteq \tB$ and $\ttX \subseteq \ttB$.
  Note that $\Sum(X) \le \Sum(\tX) < \Sum(X) + |X| \cdot \beta \le \Sum(X) + n \beta$, and similarly $\Sum(X) \le \Sum(\ttX) < \Sum(X) + n \alpha$. Thus, their difference is bounded by $|\Sum(\tX) - \Sum(\ttX)| < n \alpha$.
  
  Now round up all numbers in $\cS(\tB)$ to multiples of $\alpha$ and remove duplicates to obtain a set $S$. Observe that $|S| \le |\cS(\tB)|$. The number $s \in S$ corresponding to $\Sum(\tX) \in \cS(\tB)$ satisfies $\Sum(\tX) \le s < \Sum(\tX) + \alpha$, and thus by triangle inequality $|s - \Sum(\ttX)| \le |s - \Sum(\tX)| + |\Sum(\tX) - \Sum(\ttX)| < (n+1) \alpha$.
  Note that there are at most $2n+1$ multiples of $\alpha$ within distance less than $(n+1) \alpha$ of $s$. It follows that $\cS(\ttB)$ can be covered by $2n+1$ translates of $S$, more precisely we have 
  $$ \cS(\ttB) \subseteq S + \{-n \alpha,-(n+1) \alpha,\ldots,(n-1)\alpha,n \alpha\}, $$ 
  where we use the sumset notation $A + B = \{a+b \mid a\in A, b \in B\}$. Since $|A+B| \le |A| \cdot |B|$, we obtain $|\cS(\ttB)| \le (2n+1) |S| \le (2n+1) |\cS(\tB)| = \Os(|\cS(\tB)|)$.
\end{proof}

\medskip
Since Algorithm 1 runs in time $\Os(1/(\eps \psi) + 3^\tau)$ and Algorithm 2 runs in time $\Os(\psi^2/(\eps^2 4^\tau) + 3^\tau)$, by running the better of Algorithm 1 and Algorithm 2 we obtain an algorithm running in time 
$$ \Os(\min\{ 1/(\eps \psi), \psi^2/(\eps^2 4^\tau) \} + 3^\tau). $$
Note that both terms in the minimum coincide when $\psi = \eps^{1/3} 4^{\tau/3}$. Thus, by running Algorithm~1 when $\psi \ge \eps^{1/3} 4^{\tau/3}$ and Algorithm 2 when $\psi < \eps^{1/3} 4^{\tau/3}$ we obtain an algorithm running in time $\Os(1/(\eps^{4/3} 4^{\tau/3}) + 3^\tau)$. 
Finally, we determine the optimal choice of $\tau$ by equating the two summands, obtaining $\tau = 4/3 \cdot \log_{3 \cdot 4^{1/3}}(1/\eps)$ (rounded to the closest integer). With this choice of $\tau$, the running time becomes 
$$ \Os( 3^{4/3 \cdot \log_{3 \cdot 4^{1/3}}(1/\eps)} ) = \Os( (1/\eps)^{4/3 \log(3) / \log(3 \cdot 4^{1/3})} ). $$
We numerically evaluate the exponent to be $\approx 0.93855745...$. In particular, our running time is $\Os(1/\eps^{0.93856})$. 
This finishes the proof of Lemma~\ref{lem:improvedalgo}.
\end{proof}

\subsection{The Case of Large Optimal Value}
\label{sec:removerequirement}

Finally, we remove the assumption $\OPTL(I) \le 2$. 
To this end, we present a simple $\sqrt{2}$-approximation algorithm, which is conceptually somewhat simpler than the $(1 + \sqrt{5})/2$-approximation algorithm by Woeginger and Yu~\cite{WoegingerY92}. Importantly, we prove that our approximation algorithm computes an optimal solution if the optimal value is large, i.e., $\OPTL(I) \ge \sqrt{2}$.

\begin{lemma}[$\sqrt{2}$-approximation] \label{lem:algo1}
  Given $I \subset \RRp$ of size $n$, in time $O(n)$ we can compute disjoint $X,Y \subseteq I$ with $\max(I) \in X \cup Y$ and $R(X,Y) \le \max\{\sqrt{2}, \OPTL(I)\}$.
\end{lemma}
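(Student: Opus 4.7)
The plan is to return $(X,Y) := (\{M\}, Y')$ for $M = I[n]$ and a subset $Y' \subseteq I \setminus \{M\}$ chosen by a simple backward greedy, arguing case-by-case that either the ratio drops below $\sqrt{2}$ or it matches $\OPTL(I)$ exactly. First I dispatch the easy case: if $I[n-1] > M/\sqrt{2}$, then $(\{M\},\{I[n-1]\})$ already has ratio $M/I[n-1] < \sqrt{2}$ and is returned in $O(1)$ time. So assume from now on that every item of $I \setminus \{M\}$ is at most $M/\sqrt{2}$; this bound on the items is the slack the argument needs.

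Run a single backward scan over the sorted array: initialize $Y' := \emptyset$, and for $i = n-1, n-2, \ldots, 1$ insert $I[i]$ into $Y'$ iff $\Sum(Y') + I[i] \le M$. If the scan accepts every item, output $(\{M\}, I\setminus\{M\})$; writing $S := \Sum(I)-M$, acceptance of everything forces $S \le M$, and the returned ratio is $M/S$. The key optimality observation in this subcase is that under $S \le M$, any $\OPTL$-feasible pair $(X^*,Y^*)$ with (WLOG) $M \in X^*$ satisfies $\Sum(X^*) \ge M \ge S \ge \Sum(Y^*)$, so $R(X^*,Y^*) \ge M/S$; hence $\OPTL(I) = M/S$ and the output meets $\max\{\sqrt{2},\OPTL(I)\}$ with equality.

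If instead the scan first rejects some item $a$, then at that moment the current $Y'$ satisfies $\Sum(Y') \le M$ and $\Sum(Y')+a > M$, while $a \le M/\sqrt{2}$ by the case assumption. Thus $\Sum(Y')$ and $\Sum(Y') + a$ differ by at most $M/\sqrt{2}$ and straddle $M$, which forces at least one of them into the ``symmetric window'' $[M/\sqrt{2}, \sqrt{2}M]$. Concretely, if $\Sum(Y')+a \le \sqrt{2}M$ I also insert $a$ (the sum then overshoots $M$ by at most a $\sqrt{2}$ factor); otherwise $\Sum(Y') > \sqrt{2}M - a \ge \sqrt{2}M - M/\sqrt{2} = M/\sqrt{2}$ and I stop. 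Either choice yields $R(\{M\}, Y') \le \sqrt{2}$, and the whole procedure is a single backward pass over the pre-sorted input, so the running time is $O(n)$.

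The main conceptual obstacle is the ``scan-accepts-everything'' subcase: one must check that the naive output $(\{M\}, I\setminus\{M\})$ is actually \emph{optimal}, not merely a $\sqrt{2}$-approximation. This is exactly the ingredient that allows the bound to match $\OPTL(I)$ whenever $\OPTL(I) \ge \sqrt{2}$, because in that regime $M$ dominates the remaining items enough that no other feasible pair can beat $M/S$. The inequality chasing in the ``first rejection'' case is routine once one spots the window $[M/\sqrt{2}, \sqrt{2}M]$ and uses $a \le M/\sqrt{2}$ to squeeze $\Sum(Y')$ or $\Sum(Y')+a$ into it.
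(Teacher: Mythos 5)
Your proof is correct and follows essentially the same route as the paper: set $X = \{\max(I)\}$, grow $Y'$ greedily to the maximal suffix of $I \setminus X$ whose sum stays at most $\max(I)$, handle the ``accept-everything'' subcase by a direct optimality argument, and otherwise show that $Y'$ or $Y' \cup \{a\}$ has ratio at most $\sqrt{2}$. The only difference is cosmetic bookkeeping at the rejection point: the paper bounds the jump via $I[i] \le I[i+1] \le \Sum(Y')$ (hence $\Sum(Y) \le 2\Sum(Y')$) and a geometric-mean inequality, whereas you first dispatch the case $I[n-1] > M/\sqrt{2}$ so that the rejected item satisfies $a \le M/\sqrt{2}$, then argue that one of $\Sum(Y')$ and $\Sum(Y')+a$ must fall in the window $[M/\sqrt{2}, \sqrt{2}M]$.
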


\begin{proof}
  As usual, we consider $I$ to be an array $I\arr{1}{n}$ and assume $I[1] \le \ldots \le I[n]$, so $\max(I) = I[n]$. 
  Let $s := \Sum(I\arr{1}{n-1})$. If $s \le I[n]$, then we claim that the optimal solution for $\OPTL(I)$ is $X = \{I[n]\}$ and $Y = I\arr{1}{n-1}$, resulting in $\OPTL(I) = R(X,Y) = \Sum(X)/\Sum(Y) = I[n]/s$. Indeed, one of $X$ or $Y$ must contain $I[n]$, so suppose $I[n] \in X$. Then $\Sum(X) \ge I[n]$ and $\Sum(Y) \le s$, so $R(X,Y) \ge \Sum(X)/\Sum(Y) \ge I[n]/s$. Thus, our chosen sets $X,Y$ are optimal. 
  It follows that in case $s \le I[n]$ we can easily compute an optimal solution.
  
  So consider the case $s > I[n]$. Then we pick the largest $1 \le i \le n-1$ such that $\Sum(I\arr{i}{n-1}) > I[n]$. Note that such a value $i$ exists since $\Sum(I\arr{1}{n-1}) = s > I[n]$. Moreover, we have $i < n-1$, since $\Sum(I\arr{n-1}{n-1}) = I[n-1] \le I[n]$ by sortedness. 
  We now set $X := \{I[n]\}$, $Y := I\arr{i}{n-1}$, and $Y' := I\arr{i+1}{n-1}$. Note that $\Sum(Y) > \Sum(X) \ge \Sum(Y')$. Moreover, by sortedness we have $I[i] \le I[i+1] \le \Sum(Y')$ and thus $\Sum(Y) = \Sum(Y') + I[i] \le 2\Sum(Y')$. Hence, the better of the solutions $(X,Y)$ and $(X,Y')$ has ratio
  \[ \min\left\{R(X,Y), R(X,Y')\right\} = \min\left\{ \frac{\Sum(Y)}{\Sum(X)}, \frac{\Sum(X)}{\Sum(Y')} \right\} \le \sqrt{\frac{\Sum(Y)}{\Sum(Y')}} \le \sqrt{2}. \]
  
  In both cases we obtain a solution of ratio at most $\max\{\sqrt{2}, \OPTL(I)\}$. Implementing this algorithm in time $O(n)$ is straightforward, for pseudocode see  Algorithm~\ref{alg:algo1}.
  %
\end{proof}

\begin{algorithm}[!t]\caption{Computes a solution of ratio at most $\max\{\sqrt{2}, \OPTL(I)\}$, see Lemma~\ref{lem:algo1}.}\label{alg:algo1}
\begin{algorithmic}[1]
\Procedure{\textsc{Algorithm1}}{$I\arr{1}{n}$} 
 \Comment{given $I\arr{1}{n}$, assume $0 < I[1] \le \ldots \le I[n]$}
 \If {$I[1]+\ldots+I[n-1] \le I[n]$}
   \State \Return $(\{I[n]\}, I\arr{1}{n-1})$ 
 \Else
   \State Perform linear scan for largest $i$ such that $I[i]+I[i+1]+\ldots+I[n-1] > I[n]$ 
   \State $X := \{I[n]\}$, $Y := I\arr{i}{n-1}$, $Y' := I\arr{i+1}{n-1}$
   \If{$R(X,Y) \le R(X,Y')$} \Return $(X,Y)$
   \Else{} \Return $(X,Y')$
   \EndIf
 \EndIf
\EndProcedure
\end{algorithmic}
\end{algorithm}

The above lemma yields a solution of ratio at most $\OPTL(I)$ (and thus solves the \SSRL instance~$(I,\eps)$) if $\OPTL(I) \ge \sqrt{2}$. Moreover, it runs in time $\Os(1)$. After running this algorithm, we can assume $\OPTL(I) < \sqrt{2}$, as otherwise we already solved the \SSRL instance. Therefore, the requirement $\OPTL(I) \le 2$ of Lemma~\ref{lem:improvedalgo} is satisfied, meaning that we can solve the \SSRL instance~$(I,\eps)$ in time $\Os(1/\eps^{0.93856})$. Together, we thus obtain the following theorem.

\begin{theorem} \label{thm:improved}
  We can solve a given \SSRL instance $(I,\eps)$ in time $\Os(1/\eps^{0.93856})$.
\end{theorem}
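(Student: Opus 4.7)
The plan is to combine the two lemmas that have already been established in Sections~\ref{sec:combination} and~\ref{sec:removerequirement}. Specifically, Lemma~\ref{lem:improvedalgo} gives an $\Os(1/\eps^{0.93856})$-time algorithm for \SSRL under the extra assumption $\OPTL(I) \le 2$, while Lemma~\ref{lem:algo1} gives a linear-time algorithm producing a solution $(X,Y)$ with $\max(I) \in X \cup Y$ and $R(X,Y) \le \max\{\sqrt{2}, \OPTL(I)\}$. The task is simply to dovetail them so as to remove the $\OPTL(I) \le 2$ assumption at no asymptotic cost.

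The algorithm I would present is: first invoke Algorithm~\ref{alg:algo1} in time $O(n) = \Os(1)$ to obtain a candidate solution $(X_1,Y_1)$ with $R(X_1,Y_1) \le \max\{\sqrt{2},\OPTL(I)\}$. Then invoke the algorithm of Lemma~\ref{lem:improvedalgo} on the instance $(I,\eps)$ to obtain a second candidate $(X_2,Y_2)$. Output whichever of $(X_1,Y_1)$ and $(X_2,Y_2)$ has the smaller ratio. The total running time is $\Os(1) + \Os(1/\eps^{0.93856}) = \Os(1/\eps^{0.93856})$.

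For correctness I would split into two cases on the size of $\OPTL(I)$. If $\OPTL(I) \ge \sqrt{2}$, then $\max\{\sqrt{2},\OPTL(I)\} = \OPTL(I)$, so the first candidate already satisfies $R(X_1,Y_1) \le \OPTL(I) \le (1+\eps)\OPTL(I)$, and hence the output is a valid $(1+\eps)$-approximation regardless of what the second candidate is. If $\OPTL(I) < \sqrt{2}$, then in particular $\OPTL(I) \le 2$, so the hypothesis of Lemma~\ref{lem:improvedalgo} is satisfied and the second candidate satisfies $R(X_2,Y_2) \le (1+\eps)\OPTL(I)$, so the returned solution is again a valid $(1+\eps)$-approximation.

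There is no substantive obstacle: the hard technical work is done in Lemma~\ref{lem:improvedalgo} (the pigeonhole/geometric combination and the tradeoff over $\tau$ and $\psi$) and in Lemma~\ref{lem:algo1} (handling the case of a dominating largest item). The only thing to verify is that we do not need to know in advance which of the two regimes we are in: since both algorithms are always executed and we return the better of the two solutions, the case distinction is needed only in the analysis and not in the algorithm itself. This immediately yields Theorem~\ref{thm:improved}.
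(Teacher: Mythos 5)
Your proposal is correct and matches the paper's own argument: run the $\sqrt{2}$-approximation of Lemma~\ref{lem:algo1} and the restricted algorithm of Lemma~\ref{lem:improvedalgo}, then return the better of the two solutions, with the case split on $\OPTL(I)$ versus $\sqrt{2}$ appearing only in the analysis. Your remark that the algorithm need not know which regime it is in merely makes explicit a point the paper leaves slightly implicit, but the substance is identical.
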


As argued before, by combining the Reduction Lemma~\ref{lem:reduction} with Theorem~\ref{thm:improved} we obtain an algorithm for \SSR with running time $O(n/\eps^{0.93856} \polylog(1/\eps)) = O(n/\eps^{0.9386})$, thus proving Theorem~\ref{thm:main}.

\bibliography{main}

\end{document}